\crefname{enumi}{condition}{conditions}
\crefname{problem}{Problem}{Problems}
\newcommand{\crefrangeconjunction}{--}
\DeclareSIUnit{\belmilliwatt}{Bm}
\DeclareSIUnit{\dBm}{\deci\belmilliwatt}
\theoremstyle{plain}
\newtheorem{lemma}{Lemma}
\newtheorem{proposition}{Proposition}
\newtheorem{corollary}{Corollary}
\theoremstyle{definition}
\newtheorem{definition}{Definition}
\theoremstyle{remark}
\newtheorem{remark}{Remark}
\newcommand{\eqrefrange}[2]{\eqref{#1}\crefrangeconjunction\eqref{#2}}
\newcommand{\Norm}[2][\empty]{\left\|{#2}\right\|_{#1}}
\newcommand{\norm}[2][\empty]{\left|{#2}\right|_{#1}}
\newcommand{\fro}[1]{\Norm[\mathrm{F}]{#1}}
\newcommand{\infnorm}[1]{\norm[\infty]{#1}}
\def\a{{\mathbf a}}
\def\b{{\mathbf b}}
\def\c{{\mathbf c}}
\def\s{{\mathbf s}}
\def\e{{\mathbf e}}
\def\v{{\mathbf v}}
\def\x{{\mathbf x}}
\def\y{{\mathbf y}}
\def\z{{\mathbf z}}
\def\w{{\mathbf w}}
\def\q{{\mathbf q}}
\def\r{{\mathbf r}}
\def\yH{{\hat{\y}}}
\def\yh{{\hat{y}}}
\def\ao{\a_{\circ}}
\def\bo{\b_{\circ}}
\def\co{\c_{\circ}}
\def\wo{\w_{\circ}}
\def\yfs{y_{\mathrm{fs}}}
\def\A{{\mathbf A}}
\def\B{{\mathbf B}}
\def\D{{\mathbf D}}
\def\G{{\mathbf G}}
\def\H{{\mathbf H}}
\def\Hr{{\H_{\mathrm r}}}
\def\R{{\mathbf R}}
\def\S{{\mathbf S}}
\def\M{{\mathbf M}}
\def\X{{\mathbf X}}
\def\U{{\mathbf U}}
\def\V{{\mathbf V}}
\def\C{{\mathbf C}}
\def\Z{{\mathbf Z}}
\def\Gt{\G_{\mathrm{tx}}}
\def\Gr{\G_{\mathrm{rx}}}
\def\mf{\mathfrak{m}}
\def\mgg{\widehat{\mf}_{\S}}
\def\mS{\mf_{\S}}
\def\mSo{\mf_{\S_1}}
\def\Tc{T_{\mathrm{c}}}
\def\Q{{\mathbf Q}}
\def\I{{\mathbf I}}
\def\W{{\mathbf W}}
\def\Z{{\mathbf Z}}
\def\Ts{T_{\mathrm S}}
\def\Pt{P_{\mathrm{tx}}}
\def\Psat{P_{\mathrm{sat}}}
\def\CN{\mathcal{CN}}
\def\RR{{\mathbb R}}
\def\CC{{\mathbb C}}
\def\SS{{\mathbb S}}
\def\EE{{\mathbb E}}
\def\Pr{{\mathrm{Pr}}}
\def\trace{\mathrm{tr}}
\def\minimize{\mathrm{minimize}}
\def\maximize{\mathrm{maximize}}
\def\st{\mathrm{s.t.}}
\def\rank{\mathrm{rank}}
\def\brx{\bar{\varrho}_\x}
\def\setC{\mathcal{C}}
\def\setL{\mathcal{L}}
\def\setP{\mathcal{P}}
\def\setW{\mathcal{W}}
\def\setU{\mathcal{U}}
\def\setZ{\mathcal{Z}}
\def\Wd{\setW_{\triangleright}}
\def\Cd{\setC_{\triangleright}}
\def\Wp{\setW_{\oslash}}
\def\Cp{\setC_{\oslash}}
\def\Wl{\setW^{L'}}
\def\Ck{\setC^{K'}}
\def\Eig{\boldsymbol{\Sigma}}
\def\Gam{\boldsymbol{\Gamma}}
\def\Zero{\boldsymbol{0}}
\def\scb{\sigma_{\mathrm{cb}}}
\def\stx{\sigma_{\mathrm{tx}}}
\def\srx{\sigma_{\mathrm{rx}}}
\def\h{\mathsf{H}}
\def\t{\mathsf{T}}
\def\hsi{h_{\mathrm{si}}}
\def\hr{h_{\mathrm{r}}}
\def\fc{f_{\mathrm{c}}}
\def\zsol{\z_\star}
\def\sgT{\sigma_{\mathrm{T}}}
\def\sgQ{\sigma_{\mathrm{Q}}}
\def\bQ{\mathfrak{b}_{\mathrm{Q}}}
\def\ysi{y_{\mathrm{si}}}
\def\yr{y_{\mathrm{r}}}
\begin{document}

\newacronym{isac}{ISAC}{integrated sensing and communication}
\newacronym{bler}{BLER}{block error rate}
\newacronym{tx}{TX}{transmission}
\newacronym{rx}{RX}{reception}
\newacronym{mse}{MSE}{mean squared error}
\newacronym{nr}{NR}{new radio}
\newacronym{ula}{ULA}{uniform linear array}
\newacronym{snr}{SNR}{signal-to-noise ratio}
\newacronym{snqr}{SNQR}{signal-to-noise and quantization ratio}
\newacronym{glrt}{GLRT}{generalized likelihood ratio test}
\newacronym{sll}{SLL}{sidelobe level}
\newacronym{socp}{SOCP}{second-order cone program}
\newacronym{si}{SI}{self-interference}
\newacronym{maxsi}{max. SI}{maximum SI}
\newacronym{svd}{SVD}{singular value decomposition}
\newacronym{sdp}{SDP}{semidefinite programming}
\newacronym{scs}{SCS}{subcarrier spacing}
\newacronym{rf}{RF}{radio frequency}
\newacronym{rcs}{RCS}{radar cross-section}
\newacronym{cissir}{CISSIR}{codebooks with integral split for self-interference reduction}

\newacronym{umi}{UMi}{urban microcell}
\newacronym{los}{LoS}{line-of-sight}

\newacronym{fr2}{FR2}{frequency range 2}
\newacronym{dft}{DFT}{discrete Fourier transform}
\newacronym{mmwave}{mmWave}{millimeter wave}

\newacronym{bs}{BS}{base station}
\newacronym{ue}{UE}{user equipment}
\newacronym{csi}{CSI}{channel state information}

\newacronym{ls}{LS}{least-squares}
\newacronym{lmmse}{LMMSE}{linear-minimum-mean-square-error}

\newacronym{papr}{PAPR}{peak-to-average power ratio}
\newacronym{iq}{I/Q}{in-phase/quadrature}
\newacronym{lna}{LNA}{low-noise amplifier}
\newacronym{iid}{i.i.d.}{independent identically distributed}

\newacronym{mimo}{MIMO}{multiple-input multiple-output}

\newacronym{adc}{ADC}{analog-to-digital converter}

\newacronym{ldpc}{LDPC}{low-density parity-check}
\newacronym{ofdm}{OFDM}{orthogonal frequency-division multiplexing}
\newacronym{qpsk}{QPSK}{quadrature phase shift keying}
\newacronym{qam}{QAM}{quadrature amplitude modulation}

\newacronym{kkt}{KKT}{Karush–Kuhn–Tucker}
\newacronym{tdd}{TDD}{time-division duplex}

\newacronym{crlb}{CRLB}{Cram\`{e}r-Rao lower bound}
\newacronym{mumimo}{MU-MIMO}{multiuser MIMO}
\newacronym{zf}{ZF}{zero-forcing}

%
\title{\acs{cissir}: Beam Codebooks with Self-Interference Reduction Guarantees
for Integrated Sensing\\and Communication Beyond 5G}
%
%
%

\author{Rodrigo~Hernang\'{o}mez,~\IEEEmembership{Member,~IEEE,}
        Jochen~Fink,~\IEEEmembership{Member,~IEEE,}
        Renato~Lu\'{i}s~Garrido~Cavalcante,~\IEEEmembership{Member,~IEEE,}
        and~S{\l}awomir~Sta\'{n}czak,~\IEEEmembership{Senior Member,~IEEE}
\thanks{}
}
\IEEEpubid{%
    \begin{minipage}{\columnwidth}
    \vspace{1em}
    \copyright 2025 IEEE. Personal use of this material is permitted. Permission from IEEE must be obtained for all other uses, in any current or future media, including reprinting / republishing this material for advertising or promotional purposes, creating new collective works, for resale or redistribution to servers or lists, or reuse of any copyrighted component of this work in other works.
    \end{minipage}%
    \hspace{1em}%
    \begin{minipage}{\columnwidth}
        $ $
    \end{minipage}
}

%
%

\markboth{Accepted Draft v3.1.3}%
{}
%



\maketitle

\begin{abstract}
We propose a beam codebook design for \gls{isac}
that reduces \gls{si} to alleviate analog distortion.
Our optimization framework, which considers either
tapered beamforming or phased arrays for both analog and hybrid schemes,
modifies given reference codebooks
such that a certain \gls{si} power level is achieved.
In contrast to other low-\gls{si} codebooks,
which often rely on hardly interpretable optimization parameters,
we provide design guidelines to obtain sensing performance guarantees by deriving
analytical bounds on saturation and analog-to-digital quantization
in relation to the multipath \gls{si} level.
By selecting standard reference codebooks in our simulations,
we show how our method substantially improves the \acl{snr} for sensing
with little impact on 5G-\acs{nr} communication.
\end{abstract}

\begin{IEEEkeywords}
\Acrlong{isac}, beamforming, phased arrays, self-interference
\end{IEEEkeywords}

%
\IEEEpeerreviewmaketitle

\glsresetall


\section{Introduction}
%
%
%
%

\label{sec:intro}

\IEEEPARstart{N}{ext-generation} mobile networks are expected to incorporate sensing capabilities as an additional service~\cite{zhang2022enabling}.
To this aim, the paradigm of \gls{isac} is being actively investigated, and
\gls{isac}'s
many challenges and opportunities are rapidly unfolding as a result~\cite{liu2020joint,wymeersch20226g,liu2022integrated}.

\Gls{si} has been identified as one such \gls{isac} challenge, principally,
but not only, in so-called monostatic setups with co-located \gls{tx} and \gls{rx}
sensing systems~\cite{liu2022integrated,girotodeoliveira2022joint,baquerobarneto2019fullduplex,riihonen2023fullduplex}.
\Acl{si} is sometimes addressed by adopting radar-centric \gls{isac} waveforms~\cite{marin2021monostatic}, for which, similarly as for conventional radar, simple analog circuitry can isolate the \gls{rx} echoes from the \gls{tx} signals on the time-frequency plane~\cite{richards2010principles}.
This principle, although proven useful even for
joint design schemes~\cite{xiao2022waveform}, cannot be
generally applied
to the communication-centric waveforms
that mobile networks usually employ, as these are typically dense in time and frequency~\cite{riihonen2023fullduplex,baquerobarneto2019fullduplex}.
As a result, \gls{si} produces an unwanted signal contribution at
communication-centric \gls{isac} receivers, causing adverse effects along the processing chain:
saturation at the \gls{rx} antenna,
desensitization at the \gls{adc}, or
signal masking in the digital baseband~\cite{alexandropoulos2022fullduplex,askar2023lossless}.
Here, the former stages are critical,
as they may introduce nonlinear distortion.
For this reason, a cascaded approach is often adopted,
such that different \gls{si} thresholds are attained at each analog stage
to avoid distortion, and the residual \gls{si} is canceled in the digital domain~\cite{richards2010principles,alexandropoulos2022fullduplex}.

Many strategies to reduce \gls{si} 
for communication waveforms
have been studied in the literature,
often through the lens of the full-duplex paradigm~\cite{baquerobarneto2019fullduplex,riihonen2023fullduplex}.
This includes
passive antenna elements~\cite{askar2023lossless,yalcinkaya2023comparative},
active cancelers~\cite{alexandropoulos2022fullduplex,kolodziej2016multitap},
beamforming~\cite{liu2023fullduplex,hernangomez2024optimized,roberts2023lonestar,bayraktar2023selfinterference,bayraktar2024hybrid,liu2023joint,he2023fullduplex}, or intelligent metasurfaces~\cite{tohidi2023fullduplex};
a comprehensive review can be found in~\cite{smida2024inband}.
While most techniques require
drastic innovations in mobile networks and standards,
\gls{si}-reducing
beamforming may leverage
the existing \gls{mimo} capabilities
in the 5G \gls{nr} standard~\cite{dahlman20185g}
to enable \gls{isac} upon
current infrastructure.
Analog~\cite{liu2023fullduplex,hernangomez2024optimized,bayraktar2023selfinterference,roberts2023lonestar} and hybrid~\cite{bayraktar2024hybrid,liu2023joint} architectures represent
a natural choice here, since digital \gls{rx} beamforming
fails to address all effects happening before the \gls{adc}~\cite{he2023fullduplex}.

Regarding
beamforming and \gls{mimo} processing,
5G \gls{nr} often resorts to
predefined beam codebooks
to reduce channel feedback overhead~\cite{dahlman20185g,fu2023tutorial,dreifuerst2023massive}. 
In this vein, the authors of LoneSTAR~\cite{roberts2023lonestar}
and of~\cite{bayraktar2023selfinterference}
have proposed design methods for codebooks that achieve
considerably low \gls{si}.
These methods, however, rely on hyperparameter tuning and heuristic optimization,
and they are devised with the assumption of
flat-fading \gls{si} channels.
In contrast to this, radio measurements and industrial applications
often deal with multipath \gls{si},
whereby strong clutter reflections are as relevant as
direct antenna coupling despite the longer propagation path~\cite{askar2021interference,he2017spatiotemporal,mandelli2023survey}.

In this paper, we propose a principled
low-\gls{si} beamforming design method
for \gls{isac}, which yields
\gls{cissir}\footnote{Pronounced like ``scissor''.}.
The main contributions can be summarized as follows:
\begin{itemize}
    \item We adopt a multipath \gls{mimo} \gls{si} model
    for \gls{isac} beamforming, which, to the best of our knowledge,
    had not been addressed yet in the literature.
    Our model covers both analog and hybrid beam codebooks,
    while considering realistic \gls{adc} sampling.
    \item We devise a simple optimization framework to design
    codebooks based on either tapered beamforming,
    using matrix decomposition, or phased arrays via \gls{sdp}~\cite{fuchs2014application}.
    In contrast to the state of the art~\cite{bayraktar2023selfinterference,roberts2023lonestar},
    our framework parametrizes the \gls{si} level incurred by the \gls{tx} and \gls{rx} codebooks,
    which we design independently from each other by splitting the \gls{si} channel.
    \item Furthermore, we derive novel analytical bounds that allow us to
    select optimization parameters systematically to meet sensing quality criteria,
    thus endowing \gls{cissir} with feasibility and performance guarantees.
    More specifically, the derived bounds relate the \gls{si} level with the
    \gls{adc} quantization error and the \gls{rx} saturation.
    Our simulations confirm the superior performance of \gls{cissir} over the LoneSTAR codebook~\cite{roberts2023lonestar} according to said criteria,
    while hinting at the compatibility of our method with the  5G-\gls{nr} standard
    via link-level results.
    \item Lastly, we have based our code on popular open-source libraries like
    Sionna\texttrademark~\cite{sionna} and CVXPY~\cite{diamond2016cvxpy}, and we have
    made it publicly available for reproducibility\footnote{ Code available at \href{https://github.com/rodrihgh/cissir}{github.com/rodrihgh/cissir}}.
\end{itemize}

The remainder of this paper is organized as follows:
\cref{sec:sys-opt} introduces the system model and
the optimization problem to design \gls{si}-reduced beam codebooks for \gls{isac}.
\cref{sec:solution} presents the proposed solution, based on
the \gls{tx}-\gls{rx} split of the multipath
\gls{mimo} \gls{si} channel.
\cref{sec:adc} establishes the analytical bounds
of our scheme, providing parameter-selection guidelines
to meet given sensing constraints.
\cref{sec:results} compares \gls{cissir} to
LoneSTAR~\cite{roberts2023lonestar} and
to customary 5G-\gls{nr} codebooks~\cite{dahlman20185g,fu2023tutorial},
via ray-tracing and link-level simulations.
Finally, \cref{sec:conc} delivers some concluding remarks
and a possible outlook of this work.

\subsection{Notation and preliminaries}

We define
$[N]\coloneqq\lbrace1,2,\ldots,N\rbrace$ for a positive integer $N$.
For $z\in\CC$, we indicate its real and imaginary parts as
$\Re(z)$ and $\Im(z)$, and its conjugate as
$z^*=\Re(z)-j\Im(z)$ with $j=\sqrt{-1}$.

We write $\CN(0,\sigma^2)$ to denote the zero-mean complex normal distribution with
variance $\sigma^2$, and $\setU(a,b)$ for the uniform distribution
in $[a,b]$. The expected value of a random variable $X$ is written as $\EE[X]$.

Uppercase bold letters represent matrices
$\V\in\CC^{M\times N}$, with subindexed lowercase bold letters indicating their column vectors, $\V=[\v_1,\ldots,\v_N]$,
($\forall n\in[N]$)
$\v_n\in\CC^{M}$.
We write the convolution of functions $f:\RR\to\CC$ and $g:\RR\to\CC$ as $f\ast g$.
For $p\in[1,\infty]$, the $p$-norm for vectors $\v\in\CC^{M}$ is written as
$\Norm[p]{\v}$, whereas $\norm[p]{f}$ denotes the $p$-norm for functions ($\forall a,b\in\RR$ and $a<b$) $f:[a,b]\to\CC$, to avoid confusion.
Moreover, $\fro{\M}$, $\Norm[2]{\M}$, and $\Norm[\ast]{\M}$ indicate the Frobenius, spectral,
and nuclear norms, respectively, of
$\M\in\CC^{M\times N}$.
For any matrix $\A\in\CC^{M\times N}$, we write its transpose as
$\A^\t$ and its Hermitian transpose as $\A^\h=\left(\A^*\right)^\t$.
For a matrix-valued function $\H:\RR\to\CC^{M\times N}$,
$\c\in\CC^M$, and $\w\in\CC^N$, 
$(\c^\h\H\w)(t)$ stands for the function
$t\mapsto\c^\h\H(t)\w$, and
the integral of $\H(t)$ is understood entry-wise, i.e. 
($\forall n\in[N]\,,\, \forall m\in[M]$):
\begin{align*}
    \left[\int^{t_1}_{t_0}\H(t)\,dt\right]_{m,n}=
    \int^{t_1}_{t_0}\left[\H(t)\right]_{m,n}\,dt\,.
\end{align*}

For any pair of Hermitian matrices
$\A,\B\in\SS^M\coloneqq\lbrace\X\in\CC^{M\times M}|
\,\X=\X^\h\rbrace$, we consider the inner product $\langle\A,\B\rangle\coloneqq\trace(\B^\h\A)={\trace(\B\A)\in\RR}$.
We denote by $\SS^M_+\subset\SS^M$ the $M\times M$
positive semidefinite cone,
which induces the partial order
($\forall \A,\B\in\SS^M$) $\A\succcurlyeq\B\iff\A-\B\in\SS^M_+$.
\section{System model and problem formulation}
\label{sec:sys-opt}
\subsection{System model}
\label{sec:sysmodel}

\begin{figure*}[!ht] 
    \centering
    \def\svgwidth{\textwidth}
    \normalsize
    \includegraphics[width=0.99\textwidth]{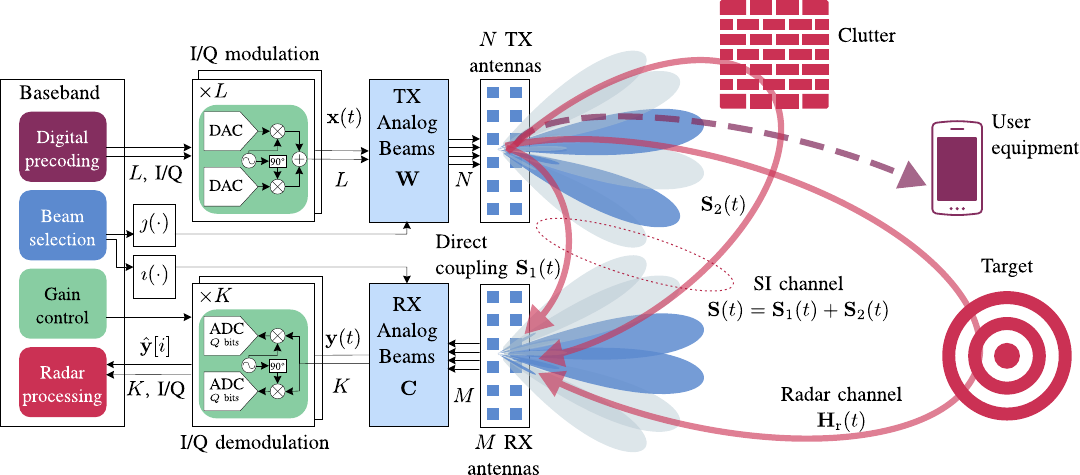}%
    \caption{System model: \gls{isac} beam codebooks in the face of multipath \acf{si}.
    Amplifiers and filters are omitted for simplicity.}%
    \label{fig:sysmodel}%
\end{figure*}

We consider a hybrid-beamforming \gls{mimo} system with $N$ \gls{tx} antennas and $L<N$ \gls{tx} \gls{rf} chains,
as depicted in \cref{fig:sysmodel}.
For a given time slot of length $T$,
the \gls{tx} \gls{rf} chains are fed with $L$ time-continuous signals,
which we write in vector form ($\forall t\in[0,T]$) as
$\x(t)\coloneqq[x_1(t),\,\ldots\,,x_L(t)]$ with
($\forall \ell\in[L]$) $x_\ell:$ $[0,T]\to\CC$.

The signals in $\x(t)$ carry user data and are transmitted according to $L$ selected beamforming vectors out of $L'\geq L$ columns from a given \gls{tx} codebook matrix
$\W\in\CC^{N\times L'}$.
The \gls{tx} chain accepts
any suitable existing mechanism
for beam selection,
denoted as the mapping $\jmath:\;[L]\to[L']$.
That is, ($\forall \ell\in[L]$)
the beamformer $\w_{\jmath(\ell)}$ is applied to $x_\ell(t)$.

Moreover, the system is equipped with an \gls{rx} sniffer~\cite{wild20236g} dedicated to radar sensing,
whereas \gls{tdd} communication is assumed
in accordance with
5G \gls{fr2}~\cite{dahlman20185g}.
The \gls{rx} sniffer
has $M$ antennas and $K$ \gls{rf} chains, such that
($\forall k\in[K]$) 
the $k$-th \gls{rx} signal is beamformed
by $\c_{\imath(k)}$, a vector selected from
the $K'\geq K$ columns in the \gls{rx} codebook $\C\in\CC^{M\times K'}$.
The \gls{rx} beam mapping 
$\imath:\;[K]\to[K']$ is envisioned 
to match the \gls{tx}
beam directions given by $\jmath$
to maximize the \gls{rx} echo power~\cite{liu2023joint}.

The \gls{tx} and \gls{rx} codebooks
are restricted to the feasibility sets
$\Wl$ and $\Ck$, i.e.:
\begin{align*}
    \W\in\Wl&\coloneqq\left\lbrace\W\in\CC^{N\times L'}\middle|
(\forall\ell\in[L'])\;\w_\ell\in\setW\right\rbrace\,,\\
\C\in\Ck&\coloneqq\left\lbrace\C\in\CC^{M\times K'}\middle|
(\forall k\in[K'])\;\c_k\in\setC\right\rbrace\,,
\end{align*}
where $\setW\in\lbrace\Wd,\Wp\rbrace$
and $\setC\in\lbrace\Cd,\Cp\rbrace$ express
constraints for either tapered beamforming
or phased arrays, the two most common array technologies.
In the tapered case,
the codebook columns are only subject to unit power:
\begin{equation}
\label{eq:1power}
\Wd\coloneqq\left\lbrace\w\in\CC^{N}\middle|
\Norm[2]{\w}=1\right\rbrace,\,
\Cd\coloneqq\left\lbrace\c\in\CC^{M}\middle|
\Norm[2]{\c}=1\right\rbrace.
\end{equation}
On the other hand, for phased arrays we have
\begin{align*}
    \Wp&\coloneqq\left\lbrace \w\in\CC^{N}\,\middle|
    \;(\forall n\in[N])\;\norm{[\w]_{n}}={1}/{\sqrt{N}}\right\rbrace,\\
    \Cp&\coloneqq\left\lbrace\c\in\CC^{M}\,\middle|
    \;(\forall m\in[M])\;\norm{[\c]_{m}}={1}/{\sqrt{M}}\right\rbrace.
\end{align*}
We note that $\Wp\subset\Wd$ and $\Cp\subset\Cd$, i.e., beam unit power is
assumed for both beamforming variants.

The signals in
$\x(t)$ impinge on the \gls{rx} sniffer over a \gls{mimo} channel
$\H:[0,\Tc]\to\CC^{M\times N}$ with maximum delay $\Tc$, which we write
($\forall t\in[0,\Tc],\forall m\in [M], \forall n\in[N]$)
as:
\begin{equation}
    \H(t)\coloneqq\begin{bmatrix}
    h_{1,1}(t) & \dots  & h_{1,N}(t)\\
    \vdots & \ddots & \vdots\\
    h_{M,1}(t) & \dots  & h_{M,N}(t) 
    \end{bmatrix},\;h_{m,n}:[0,\Tc]\to\CC\,.\label{eq:mimo-ch}
\end{equation}
Furthermore, we assume $\H(t)$ to be the superimposition of
the unknown radar channel
$\Hr:[0,\Tc]\to\CC^{M\times N}$ and
the \gls{si} channel
$\S:[0,\Tc]\to\CC^{M\times N}$:
\begin{equation*}
    (\forall t\in[0,\Tc])\;
    \H(t)=\Hr(t)+\S(t)\,.
\end{equation*}
We assume $\S(t)$ is known through estimation procedures like those
described in~\cite{roberts2023lonestar}, whereby the \gls{si}
is separated from the radar channel
in the delay-Doppler domain by exploiting the short-range
and static nature of $\S(t)$~\cite{askar2021interference,roberts2022beamformed}.
While our analysis holds for arbitrary bounded matrix-valued functions,
we choose to model $\S(t)=\sum_{d=1}^D\S_{d}(t)$ as a sum of $D$ paths.
This choice, motivated by the
available measurements in the literature~\cite{he2017spatiotemporal},
allows us to differentiate between direct antenna coupling and clutter reflections
($\S_1(t)$ and $\S_2(t)$ in \cref{fig:sysmodel})
for an easier comparison with single-path approaches.
All entries in $\Hr(t)$ and $\S(t)$ are
functions with domain in $[0,\Tc]$,
similarly to the entries of $\H(t)$ in \eqref{eq:mimo-ch}.

The \gls{rx} signals are expressed in vector form as $\y(t)$:
\begin{align}
\nonumber
(\forall t\in[0,T+\Tc])\quad
    \y(t)&=\left[y_1(t),\ldots,y_K(t)\right]^\top,\;\mathrm{where}\\
    (\forall k\in[K])\quad
    y_k(t)=\sum_{\ell=1}^L&\left(\left(\c_{\imath(k)}^\h\H\,\w_{\jmath(\ell)}\right)\ast x_\ell\right)(t)\,.
    \label{eq:rx-signal}
\end{align}
For all $k\in[K]$, two \glspl{adc} sample the \gls{iq} components of $y_k(t)$ at the rate $1/T_\mathrm{S}$ with $Q$ quantization bits each, yielding
the following \gls{rx} digital sensing signal
($\forall i\in\lbrace0,1,\ldots,\left\lfloor{(T+\Tc)}/{T_{\mathrm{S}}}\right\rfloor\rbrace$):
\begin{align}
    \nonumber
    \yH[i]&=[\yh_1[i],\ldots,\yh_K[i]]^\top,\\
    \left(\forall k\in [K]\right)\quad
    \yh_{k}[i]&\coloneqq y_k(i\Ts)+v_{k}[i]
    +u_{k}[i].\label{eq:digital}
\end{align}
In \eqref{eq:digital},
$v_{k}[i]\sim\CN(0,\sigma^2_\mathrm{T})$ is a sample of \gls{iid} thermal noise, whereas
$u_{k}[i]$ is an independent sample of complex quantization noise.
The latter is described in more detail in \cref{sec:adc}.

\subsection{The optimization problem}

Given a pair of
\gls{tx}/\gls{rx} beam codebooks
$\A\in\Wl$ and $\B\in\Ck$,
which we assume optimal for
 \gls{tdd} codebook-based communication,
we aim to derive
a codebook pair
($\W,\C$) that guarantees
a certain sensing quality
in the presence of \gls{si}.
We assess sensing quality via the \gls{snr},
which directly determines
detection and estimation performance~\cite{richards2010principles}.
In practice,
\gls{si} and \gls{snr} are linked
through the quantization noise
$\sgQ^2\coloneqq\EE[\norm{u_k[i]}^2]$,
as we show in \cref{sec:adc}.
In particular, \cref{thm:quant-bound} therein
allows
to bound $\sgQ$, up to a scaling factor,
by the \emph{\gls{maxsi}} $\mS(\C, \W)$, which
we define for $\S(t)$
($\forall\,\C\in\Ck$, $\forall\,\W\in\Wl$)
as
\begin{equation}\label{eq:maxsi}
    \mS(\C,\W)\coloneqq\max_{\substack{k\in[K'],\\\ell\in[L']}}\int_0^{\Tc}\norm{\c_k^\h\S(t)\w_{\ell}}\,dt\,.
\end{equation}
Therefore, we choose to limit the
\gls{si}-induced
quantization noise
by imposing a certain \emph{target \gls{si}}
${\varepsilon}>0$ such that $\mS(\C, \W)\leq\varepsilon$.
We note that  \eqref{eq:maxsi} does not depend on the
\gls{tx} signal
$\x(t)$ or the beam-selection mappings $\imath$ and $\jmath$, since these may be unknown during codebook design.

Moreover, we wish to optimize communication
by minimizing $\stx^2$, the deviation of
the \gls{isac} \gls{tx} codebook $\W$ from the
communication-optimal
codebook $\A$.
We also aim to minimize the deviation $\srx^2$ between $\B$
and the \gls{rx} codebook $\C$,
which could be potentially used
for standard \gls{tdd} communication.
Additionally, designing ($\W,\C$) close to ($\A,\B$)
may improve the sensing \gls{snr},
since practical communication codebooks usually maximize the
\gls{tx}/\gls{rx} gain across a grid of angular directions~\cite{dahlman20185g,fu2023tutorial}.
In other words,
 $\stx^2$ and  $\srx^2$ minimization
tends to increase the
numerator of
$\mathrm{SNR}\propto\norm{\c^\h_{k_\theta}\b_\theta\a^\h_\theta\w_{\ell_\theta}}^2$
for a target at the direction given by the steering vectors
$\a_\theta\in\CC^N$,
$\b_\theta\in\CC^M$,
and for certain unknown beam indices $\ell_\theta\in[L']$,
$k_\theta\in[K']$.

In line with the literature~\cite{roberts2023lonestar,bayraktar2023selfinterference}, we
compute $\stx^2$ and $\srx^2$ with the Frobenius norm, i.e.:
\begin{equation}
\begin{aligned}
    \stx^2&\coloneqq\fro{\W-\A}^2/\underbrace{\fro{\A}^2}_{=L'}=\frac{1}{L'}
    \sum_{\ell=1}^{L'}
    \Norm[2]{\w_\ell-\a_\ell}^2\,,\\
    \srx^2&\coloneqq\fro{\C-\B}^2/\underbrace{\fro{\B}^2}_{=K'}=\frac{1}{K'}
    \sum_{k=1}^{K'}
    \Norm[2]{\c_k-\b_k}^2\,.
    \label{eq:cb-dev}
\end{aligned}
\end{equation}
Hence, the design of ($\W,\C$)
can be posed as the following optimization problem:
\begin{subequations}\label[problem]{prob:joint}
\begin{align}\label{eq:joint-obj}
    \underset{{\W\in\Wl,\,\C\in\Ck}}{\minimize}
    &\fro{\W-\A}^2+
    \omega\fro{\C-\B}^2\\
    \st\quad&\label{eq:eps-const}
     \mS(\C,\W)\leq{\varepsilon}\,,
\end{align}
\end{subequations}
with $\omega>0$ a trade-off parameter
between \gls{tx} and \gls{rx} codebook deviation.
\cref{prob:joint} inverts the approach taken 
in~\cite{roberts2023lonestar,bayraktar2023selfinterference},
where the (flat-fading) \gls{si}
is minimized under a heuristically tuned maximum codebook deviation
constraint
$\scb^2\geq\max\lbrace\stx^2,\srx^2\rbrace$.
In contrast to this,
\cref{prob:joint} enables both
the tuning of $\varepsilon$ according to sensing metrics
(cf. \cref{sec:adc}), and the mitigation of
multipath \gls{si}, via $\mS(\C,\W)$.
\section{Proposed scheme}
\label{sec:solution}

Problem~\eqref{prob:joint} is nonconvex
because of the constraint sets $\Ck$ and $\Wl$, and also due to the coupling of the codebooks $\C$ and $\W$ in $\mS(\C,\W)$.
In order to derive a more tractable problem, we propose a decoupling of $\C$ and $\W$.
The resulting problem, albeit still nonconvex,
can be solved optimally for the case of tapered beamforming and phased arrays,
as we detail in \crefrange{sec:taper}{sec:phased}.

As a preliminary step,
we note that we can modify \eqref{eq:joint-obj},
given the unit power constraint \eqref{eq:1power},
to write \cref{prob:joint} in the following
equivalent form:
\begin{align}\label[problem]{prob:vec}
    \underset{{\W\in\Wl,\,\C\in\Ck}}{\maximize}
    &\frac{1}{L'}
    \sum_{\ell=1}^{L'}
    \Re\left(\a_\ell^\h\w_\ell\right)+
    \frac{\omega}{K'}
    \sum_{k=1}^{K'}
    \Re\left(\b_k^\h\c_k\right)\\
    \st\;\;
    (\forall k\in[K'],\,&\forall\ell\in[L'])\quad
     \int_0^{\Tc}\norm{\c_k^\h\S(t)\w_{\ell}}\,dt
     \leq{\varepsilon}\,.\nonumber
\end{align}

\subsection{Problem decoupling}
\label{sec:split}

We propose an alternative to \eqref{prob:vec} that allows us to optimize the columns of
$\C$ and $\W$ individually while ($\C$, $\W$) remains in the original feasible region. For this,
we consider the following decomposition, which we refer to as \emph{integral split}.

\begin{definition}[Integral split]
\label{def:svd}
We define the integral split of
an \gls{si} channel $\S(t)$
as the matrices
$\Gt\in\SS^N_+$ and
$\Gr\in\SS^M_+$:
\begin{equation*}
    \Gt\coloneqq\int_0^{\Tc}\V_t\Eig_t\V_t^\h\,dt\,,
    \;\mathrm{and}\;
    \Gr\coloneqq\int_0^{\Tc}\U_t\Eig_t\U_t^\h\,dt\,,
\end{equation*}
where, for every $t\in[0, \Tc]$,
$\U_t$, $\Eig_t$, and $\V^\h_t$
are given as a \gls{svd} of $\S(t)=\U_t\Eig_t\V^\h_t$.
\end{definition}
\begin{proposition}\label{prop:paralel}
    Let $\Gt\in\SS^N_+$ and $\Gr\in\SS^M_+$ be
    the integral split (see \cref{def:svd})
    of $\S(t)$ in \eqref{eq:maxsi}, and
    consider the quantity
    ($\forall\,\C\in\Ck,\,\forall\,\W\in\Wl$):
    \begin{equation*}
        \mgg(\C,\W)\coloneqq
        \max_{k\in[K']}\sqrt{\c_k^\h\Gr\c_k}
        \max_{\ell\in[L']}\sqrt{\w_\ell^\h\Gt\w_\ell}\,.
    \end{equation*}
    \begin{equation}
        \text{Then, we have:}\quad
        \label{eq:split-ineq}
        \mS(\C,\W)\leq
        \mgg(\C,\W)\,.\qquad\qquad
    \end{equation}
\end{proposition}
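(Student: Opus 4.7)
The plan is to bound the inner integrand in \eqref{eq:maxsi} by a product of two factors, one depending only on $\c_k$ and one only on $\w_\ell$, and then apply Cauchy--Schwarz in $L^2[0,\Tc]$ to push the integral inside the product of square roots. Separating $\c_k$ from $\w_\ell$ (and from the time variable $t$) is exactly what is needed in order to take the maxima over $k$ and $\ell$ independently at the end.

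The key step begins by substituting the pointwise \gls{svd} $\S(t)=\U_t\Eig_t\V_t^\h$ and writing
\begin{equation*}
    \c_k^\h\S(t)\w_\ell
    = \bigl(\Eig_t^{1/2}\U_t^\h\c_k\bigr)^{\!\h}\bigl(\Eig_t^{1/2}\V_t^\h\w_\ell\bigr),
\end{equation*}
where $\Eig_t^{1/2}$ denotes the entry-wise square root of the diagonal singular-value matrix. Applying the vector Cauchy--Schwarz inequality to this inner product yields, for every $t\in[0,\Tc]$,
\begin{equation*}
    \norm{\c_k^\h\S(t)\w_\ell}
    \leq \sqrt{\c_k^\h\U_t\Eig_t\U_t^\h\c_k}\,\sqrt{\w_\ell^\h\V_t\Eig_t\V_t^\h\w_\ell}.
\end{equation*}

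Next I would integrate both sides over $[0,\Tc]$ and apply the Cauchy--Schwarz inequality in $L^2[0,\Tc]$ to the right-hand side, which gives
\begin{equation*}
    \int_0^{\Tc}\norm{\c_k^\h\S(t)\w_\ell}\,dt
    \leq \sqrt{\int_0^{\Tc}\c_k^\h\U_t\Eig_t\U_t^\h\c_k\,dt}\,\sqrt{\int_0^{\Tc}\w_\ell^\h\V_t\Eig_t\V_t^\h\w_\ell\,dt}.
\end{equation*}
Since $\c_k$ and $\w_\ell$ are constant in $t$, the integrals commute with the outer quadratic forms, so the right-hand side equals $\sqrt{\c_k^\h\Gr\c_k}\,\sqrt{\w_\ell^\h\Gt\w_\ell}$ by \cref{def:svd}. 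Taking the maximum over $(k,\ell)\in[K']\times[L']$ on both sides and using the fact that the bound factorizes into a $\c_k$-part and a $\w_\ell$-part yields \eqref{eq:split-ineq}.

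The only delicate point I anticipate is the justification of the two Cauchy--Schwarz steps: measurability/integrability of $t\mapsto\U_t,\Eig_t,\V_t$ has to be handled, but this follows from the assumed boundedness of $\S(t)$ together with the fact that $\S$ is a sum of finitely many path contributions, so the SVD can be chosen measurably and the square-root factors are square-integrable on $[0,\Tc]$. Everything else is a direct manipulation.
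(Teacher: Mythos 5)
Your proposal is correct and follows essentially the same route as the paper's proof: a pointwise split via the SVD $\S(t)=\U_t\Eig_t^{1/2}\Eig_t^{1/2}\V_t^\h$, the vector Cauchy--Schwarz inequality, then the integral Cauchy--Schwarz inequality, and finally the maximization over $(k,\ell)$. The added remark on measurable selection of the SVD is a harmless refinement the paper leaves implicit.
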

\begin{proof}
    The proof is given in \cref{apx:split}. 
\end{proof}

Using \cref{prop:paralel},
we suggest replacing $\mS(\C,\W)$ with
$\mgg(\C,\W)$ in the equivalent original \cref{prob:vec}
to yield:
\begin{align}\label[problem]{prob:relax}
    \underset{{\W\in\Wl,\,\C\in\Ck}}{\maximize}
    &\frac{1}{L'}
    \sum_{\ell=1}^{L'}
    \Re\left(\a_\ell^\h\w_\ell\right)+
    \frac{\omega}{K'}
    \sum_{k=1}^{K'}
    \Re\left(\b_k^\h\c_k\right)\\
    \st\;\;
    (\forall k\in[K'],\,&\forall\ell\in[L'])\;\;
     \sqrt{\c_k^\h\Gr\c_k}
     \sqrt{\w_\ell^\h\Gt\w_\ell}
     \leq{\varepsilon}\,.\nonumber
\end{align}
\cref{prob:relax} is a restriction of
\cref{prob:joint}, in the sense that
any feasible codebook pair for \eqref{prob:relax}
is also feasible for 
\eqref{prob:joint}
by \cref{prop:paralel}.
Due to the potential bound gap in \eqref{eq:split-ineq},
however, a solution to \eqref{prob:relax} is
not necessarily a solution to \eqref{prob:joint}.

Nevertheless, \eqref{eq:split-ineq}
can become tight for some practical situations.
For the case $M=N$,
for instance, tightness is approximated
if the \gls{maxsi} is reached for a certain
pair $\c=\w$
under an \gls{si} channel containing a predominant
symmetric path $\S(t_0)=\s\s^\h$, such that
$\Gt\approx\Gr\approx\s\s^\h$.
Thus, we expect
$\mS(\C,\W)\approx\mgg(\C,\W)$
under spatially sparse symmetric \gls{si} channels,
which are relevant for monostatic \gls{isac}~\cite{baquerobarneto2019fullduplex,hernangomez2024optimized}.

Based on \cref{prob:relax},
we propose designing
a codebook pair
($\bar{\C},\bar{\W}$) columnwise,
by solving $L'+K'$
problems in the following form:
\begin{subequations}\label[problem]{prob:split}
\begin{align}
    \label{eq:split.obj}
    \underset{\z\in\CC^{P}}{\maximize}&{\quad\Re(\r^\h\z)}\\
    \st\quad&\quad\label{eq:si-split}
    \z^\h\G\z\leq\epsilon,\\
    &\quad
    \z\in\setZ\,,
\end{align}
\end{subequations}
with $P\in\lbrace M,N\rbrace$,
$\setZ\in\lbrace\setW,\setC\rbrace$,
and the parameters
$\r\in\setZ$,
$\G\in\SS^P_+$, and $\epsilon>0$
chosen
($\forall \ell\in[L']$,
$\forall k\in[K']$)
as
\begin{align}
    \nonumber
    \r=\a_\ell\,,\, \G=\Gt\,,\,\;
\epsilon=\varepsilon\beta,
    \;\mathrm{for}\;
    \bar{\w}_\ell&,\\
    \r=\b_k\,,\, \G=\Gr\,,\,
\epsilon=\varepsilon\beta^{-1},\,
    \mathrm{for}\;\bar{\c}_k&,
\label{eq:opt-params}
\end{align}
for $\varepsilon>0$ in \eqref{prob:relax} and
$\beta>0$, such that
\begin{equation*}
    \mgg(\bar{\C},\bar{\W})=
    \Bigl(\max_{k\in[K']}\underbrace{\bar{\c}_k^\h\Gr\bar{\c}_k}_{\leq\varepsilon\beta^{-1}}
    \max_{\ell\in[L']}
    \underbrace{\bar{\w}_\ell^\h\Gt\bar{\w}_\ell}_{\leq\varepsilon\beta}\Bigr)^{1/2}
    \leq\varepsilon\,.
\end{equation*}
Denoting the optimal value of \eqref{prob:split}
with $f_{\setZ}(\r,\G,\epsilon)$, we can
express the codebook deviations attained through this scheme as
${\stx^2=1-\frac{1}{L'}\sum_{\ell=1}^{L'}f_{\setW}(\a_\ell,\Gt,\varepsilon\beta)}$ and
${\srx^2=1-\frac{1}{K'}\sum_{k=1}^{K'}f_{\setC}(\b_k,\Gr,\varepsilon\beta^{-1})}$.
Optimization theory can prove that $f_{\setZ}(\r,\G,\epsilon)$ is a monotonically increasing
function of $\epsilon$~\cite{boyd2004convex}, and thus $\stx^2$ and $\srx^2$
are monotonically decreasing on $\varepsilon$ for $\beta>0$.

The possibility to
trade off \gls{tx} and \gls{rx}
codebook deviation,
embodied by $\omega$
in \eqref{prob:joint},
is provided instead
by $\beta$
in \eqrefrange{prob:split}{eq:opt-params}.
In symmetric scenarios where $\A=\B$
and $\Gt=\Gr$,
we expect the balance point $\beta=1$
to yield a good sensing \gls{snr}
for given $\varepsilon$, based
on the observation that
the \gls{rx} echo power
for a target at
an angular direction $\theta$
is proportional to the product
$f_\setC(\b_{k_\theta},\Gr,\varepsilon\beta^{-1})f_\setW(\a_{\ell_\theta},\Gt,\varepsilon\beta)$
for certain $k_\theta=\ell_\theta\in[L']$.
The value of $\beta$ can be alternatively determined with respect to antenna saturation,
while $\varepsilon$ can be set in relation
to the quantization noise.
The choice of $\varepsilon$ and $\beta$ with respect to
said criteria is analyzed in detail in \Cref{sec:adc}.

We refer to the solutions to
\eqrefrange{prob:split}{eq:opt-params}
as \emph{\acrfull{cissir}}.
In the following \crefrange{sec:taper}{sec:phased}, we particularize
Problem~\eqref{prob:split} for specific
choices of $\setC$ and $\setW$, but
we first remark some general properties
induced by the unit power constraint
\eqref{eq:1power} from the system model in
\cref{sec:sysmodel}.

\begin{remark}\label{rm:cvx}
\cref{prob:split} is still nonconvex
for $\setZ\in\lbrace\Wd,\Wp,\Cd,\Cp\rbrace$
due to \eqref{eq:1power}.
A straightforward
convex relaxation can be obtained by replacing
$\setZ$ with its convex hull.
\end{remark}
\begin{remark}\label{rm:max-val}
The optimal value fulfills
$f_{\setZ}(\r,\G,\epsilon)\leq1$,
and the equality is attained with
$\zsol=\r$ if $\epsilon\geq\r^\h\G\r$.
To see this, we note that:
\begin{equation*}
    (\forall\zsol\in\setZ)\quad 
    \Re(\r^\h\zsol)\leq
    \norm{\r^\h\zsol}\leq
    \Norm[2]{\r}\Norm[2]{\zsol}=1\,,
\end{equation*}
where the equality follows from \eqref{eq:1power}.
The inequalities above become tight for $\zsol=\r$, but
$\r$ is only feasible if $\epsilon\geq\r^\h\G\r$.
\end{remark}

\begin{algorithm}[t]%
\caption{Tapered \acs{cissir} optimization.}%
\label{alg:lagrange}%
\begin{algorithmic}[1]%
\Require $\S(t),\A,\B,\varepsilon,\beta$
\State $\Gt,\Gr\gets$ \Call{IntegralSplit}{$\S(t)$}
\Comment{Def.~\ref{def:svd}}
\State  $\W\gets$ \Call{SpectralCodebook}{$\A,\Gt,\varepsilon\beta$}
\State  $\C\gets$ \Call{SpectralCodebook}{$\B,\Gr,\varepsilon\beta^{-1}$}
\State  \Return $\W$, $\C$
\Function{SpectralCodebook}{$\R,\G,\epsilon$}
    \State $\Q,\Gam \gets$ \Call{SpectralDecomposition}{$\G$}
    \Comment{Prop.~\ref{pr:taper-sol}}
    \For{$i=1,2,\,\ldots\,,\,J\coloneqq$\Call{NumColumns}{$\G$}}
    \State $\r\gets[\R]_{:,i}$
    \If {$\epsilon<\r^\h\G\r$}
        \State  $\z_i\gets$ \Call{OptimizeVector}{$\r,\epsilon,\Q,\Gam$}
    \Else
        \State $\z_i\gets\r$
    \EndIf
\EndFor
\State \Return $[\z_1,\z_2,\,\ldots\,,\z_{J}]$
\EndFunction
\Function{OptimizeVector}{$\r,\epsilon,\Q,\Gam$}
\State $\setP(\nu)\gets$ \Call{BuildPoly}{$\r,\epsilon,\Q,\Gam$}
\Comment{Prop.~\ref{pr:taper-sol}}
\State $\nu_\star\gets$ \Call{LargestRealRoot}{$\setP(\nu)$}
\State $\check{\z}\gets\Q(\Gam+\nu_\star\I)^{-1}\Q^\h\r$
\State \Return ${\check{\z}}/{\Norm[2]{\check{\z}}}$
\EndFunction
\end{algorithmic}%
\end{algorithm}

\subsection{Tapered beamforming}
\label{sec:taper}

We first particularize \cref{prob:split} to tapered beamforming,
where amplitude and phase can be controlled continuously and individually for every antenna element.
As we will see, this case admits a semi-closed
form solution that relies only on matrix decompositions and other linear algebra operations.

For tapering, we have $\setW=\Wd$ and $\setC=\Cd$, and \cref{prob:split} takes the form:
\begin{subequations}\label[problem]{prob:taper}
\begin{align}
    \label{eq:obj-taper}
    \underset{\z\in\CC^{P}}{\maximize}&{\quad\Re(\r^\h\z)}\\
    \st\quad&\quad\label{eq:si-taper}
    \z^\h\G\z\leq\epsilon,\\
    &\quad\z^\h\z=1\,.\label{eq:norm-const}
\end{align}
\end{subequations}
Although \cref{prob:taper} is nonconvex,
it can be solved efficiently
as stated below and outlined in
\cref{alg:lagrange}.
\begin{proposition}\label{pr:taper-sol}
    \renewcommand{\theenumi}{\roman{enumi}}%
    Consider a spectral decomposition $\G=\Q\Gam\Q^\h\neq\Zero$ of the matrix $\G\in\SS^P_+$ in \eqref{eq:si-taper}, where $\Q$ is unitary and $\Gam$ a diagonal matrix with diagonal entries given by $(\forall p\in[P])\;
        \sigma_p\coloneqq
        \left[\Gam\right]_{p,p}$.
    \cref{prob:taper} admits
    a solution
    \begin{equation*}
        \zsol=\check{\z}/\Norm[2]{\check{\z}}\,,\quad\mathrm{with}\quad
        \check{\z}\coloneqq\Q
        \left(\Gam+\nu_\star\I\right)^{-1}
        \Q^\h\r\,,
    \end{equation*}
    where $\nu_\star$ is a positive root of the real polynomial
        \begin{equation*}
    \setP(\nu)\coloneqq\sum_{p=1}^P{\left(\sigma_p-\epsilon\right)\norm{\q_p^\h\r}^2}\prod_{q\neq p}{\left(\sigma_q+\nu\right)^2}\,,
\end{equation*}
under certain sufficient conditions depending on the rank of $\G$.
In particular, $\zsol$ exists if either:
\begin{enumerate}
    \item\label{cond:rank-def}
    $\rank(\G)<P$ and $\epsilon\in(0,\r^\h\G\r)$, or
    \item\label{cond:rank-full}
    $\rank(\G)=P$ and $\epsilon\in(\bar\sigma(\r,\G),\r^\h\G\r)$, with
 \begin{equation*}\label{eq:emin}
    \bar\sigma(\r,\G)\coloneqq
    \frac{\sum_{p=1}^P{{\norm{\q_p^\h\r}^2}/{\sigma_p}}}
    {\sum_{p=1}^P{{\norm{\q_p^\h\r}^2}/{\sigma_p^2}}}
    \,.
 \end{equation*}
\end{enumerate}
\end{proposition}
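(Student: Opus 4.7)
The plan is to solve \cref{prob:taper} by analyzing a convex relaxation and matching the resulting KKT conditions with the announced closed form. Although \eqref{eq:norm-const} renders the problem nonconvex, I would first replace it with the ball constraint $\z^\h\z\leq 1$: the resulting program maximizes a linear objective over the intersection of two convex sublevel sets, so it is convex and any optimum lies on the boundary of the feasible region. The Lagrangian stationarity condition then reads $(\nu\G+\mu\I)\z = \tfrac{1}{2}\r$, where $\nu\geq 0$ and $\mu\geq 0$ are the dual variables of \eqref{eq:si-taper} and of the ball, respectively.

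The first step of the proof is to establish that, under condition~\ref{cond:rank-def} or \ref{cond:rank-full}, both multipliers are strictly positive at the relaxed optimum; this restores the original equality \eqref{eq:norm-const} and makes \eqref{eq:si-taper} active. Positivity of $\nu$ follows from $\epsilon<\r^\h\G\r$, since $\nu=0$ would force $\z=\r/(2\mu)$, which after normalization gives $\z=\r$ and violates \eqref{eq:si-taper}. For $\mu>0$ I would case-split on $\rank(\G)$: in the full-rank branch \ref{cond:rank-full}, the alternative $\mu=0$ yields $\z=(2\nu)^{-1}\G^{-1}\r$, whose squared norm at the SI boundary equals $\epsilon/\bar\sigma(\r,\G)$ and therefore exceeds $1$ exactly when $\epsilon>\bar\sigma(\r,\G)$; in the rank-deficient branch \ref{cond:rank-def}, the $\mu=0$ candidate fails to be feasible whenever $\r$ has a component outside $\mathrm{range}(\G)$, since $\G\z=\r/(2\nu)$ becomes unsolvable. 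Absorbing the overall scalar into the normalization step and defining $\nu_\star=\mu/\nu>0$ then recovers the announced $\check{\z}=\Q(\Gam+\nu_\star\I)^{-1}\Q^\h\r$.

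The polynomial $\setP$ then arises from the remaining complementary-slackness condition $\zsol^\h\G\zsol=\epsilon$. Substituting the spectral decomposition and exploiting the diagonality of $\Gam$ reduces this to the scalar identity $\sum_p(\sigma_p-\epsilon)\norm{\q_p^\h\r}^2/(\sigma_p+\nu_\star)^2 = 0$; multiplying through by the common denominator $\prod_q(\sigma_q+\nu_\star)^2$ gives $\setP(\nu_\star)=0$ verbatim. Existence of a positive root follows from the intermediate value theorem: since $\Norm[2]{\r}=1$, the leading coefficient of $\setP(\nu)$ equals $\r^\h\G\r-\epsilon>0$, so $\setP(\nu)\to+\infty$ as $\nu\to+\infty$, whereas $\setP(0)<0$ under \ref{cond:rank-def} (a zero eigenvalue of $\G$ combined with a nonzero $\norm{\q_p^\h\r}$ contributes a dominant $-\epsilon$ term) and under \ref{cond:rank-full} (the sign of $\setP(0)$ coincides with that of $\bar\sigma(\r,\G)-\epsilon<0$ after factoring out $\prod_q\sigma_q^2$). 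The most delicate step I anticipate is the case analysis needed to secure $\mu>0$; once that is in place, the polynomial identity and the sign argument above are essentially mechanical.
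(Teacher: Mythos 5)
Your route is essentially the paper's: the same ball relaxation of \eqref{eq:norm-const}, the same KKT stationarity giving $\check{\z}\propto(\G+\nu_\star\I)^{-1}\r$ with $\nu_\star$ the ratio of the two multipliers, the same polynomial $\setP$ obtained by eliminating the scale between the two active constraints, and your intermediate-value sign argument (leading coefficient $\r^\h\G\r-\epsilon>0$, negative behaviour at the origin) is just the IVT counterpart of the paper's Descartes'-rule lemma. The full-rank branch is handled correctly, including the neat observation that a vanishing ball multiplier forces $\Norm[2]{\z}^2=\epsilon/\bar\sigma(\r,\G)$, which is infeasible precisely when $\epsilon>\bar\sigma(\r,\G)$. (One small omission: KKT necessity at the relaxed optimum needs a constraint qualification; Slater holds since $\z=\Zero$ is strictly feasible, and the paper invokes it explicitly.)

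The gap is in the rank-deficient branch. If $\rank(\G)\leq P-2$, every term of $\setP(0)=\sum_p(\sigma_p-\epsilon)\norm{\q_p^\h\r}^2\prod_{q\neq p}\sigma_q^2$ contains a zero factor, so $\setP(0)=0$, not $\setP(0)<0$; the IVT step as you state it fails, and what is actually needed is the sign of $\setP$ on a right neighbourhood of $0$, i.e.\ the lowest-order nonzero coefficient, which the paper identifies as the degree-$(2R-2)$ coefficient $-\epsilon\bigl(\sum_{p:\,\sigma_p=0}\norm{\q_p^\h\r}^2\bigr)\prod_{\sigma_q\neq0}\sigma_q^2$ with $R=P-\rank(\G)$. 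Moreover, both the negativity of that coefficient and your exclusion of the ball multiplier vanishing (``whenever $\r$ has a component outside $\mathrm{range}(\G)$'') require $\r$ to have a nonzero null-space component, which condition (i) does not guarantee: for $\G=\mathrm{diag}(1,0)$, $\r=(1,0)^\t$, one gets $\setP(\nu)=(1-\epsilon)\nu^2$ with no positive root, and the optimizer $(\sqrt{\epsilon},\sqrt{1-\epsilon})^\t$ is not of the announced form. The paper's own proof quietly relies on the same hypothesis, so this last caveat is a shared blind spot rather than a deviation; but as written your argument only covers nullity one (or the case $\r\notin\mathrm{range}(\G)$), and you need the small-$\nu$ coefficient analysis to handle $\rank(\G)\leq P-2$.
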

\begin{proof}
The proof is given in \cref{apx:taper}.
\end{proof}

\begin{algorithm}[t]%
\caption{Phased \acs{cissir} optimization.}
\label{alg:sdr}
\begin{algorithmic}[1]
\Require $\S(t),\A,\B,\varepsilon,\beta$
\State $\Gt,\Gr\gets$ \Call{IntegralSplit}{$\S(t)$}
\Comment{Def.~\ref{def:svd}}
\State  $\W\gets$ \Call{SdpCodebook}{$\A,\Gt,\varepsilon\beta$}
\State  $\C\gets$ \Call{SdpCodebook}{$\B,\Gr,\varepsilon\beta^{-1}$}
\State  \Return $\W$, $\C$
\Function{SdpCodebook}{$\R,\G,\epsilon$}
    \For{$i=1,2,\,\ldots\,,\,J\coloneqq$\Call{NumColumns}{$\G$}}
    \State $\r\gets[\R]_{:,i}$
    \State${\tilde{\Z}}\gets$ \Call{SolveSdp}{$\r,\G,\epsilon$}\label{alg:solve-sdr}
    \Comment{\eqrefrange{eq:sdp-obj}{eq:psd}}
    \State $\tilde{\z}\gets\sqrt{\sigma_{\max}({\tilde{\Z}})}\v_{{\max}}({\tilde{\Z}})$
    \State $\z_i\gets\tilde{\z}({\tilde{\z}^\h\r}/{\norm{\tilde{\z}^\h\r}})$
\EndFor
\State \Return $[\z_1,\z_2,\,\ldots\,,\z_{J}]$
\EndFunction
\end{algorithmic}
\end{algorithm}

\subsection{Phased-array optimization}
\label{sec:phased}

In contrast to the previous section,
the phased-array case does not admit a direct solution.
Nevertheless, we will show that phased-array codebooks
can be obtained via \acf{sdp}~\cite{fuchs2014application},
as described in \cref{alg:sdr}.

Phased array codebooks $\Cp$ and $\Wp$ yield the
following form of \cref{prob:split}:
\begin{equation}\label[problem]{prob:phased}
\begin{aligned}
    \underset{\z\in\CC^{P}}{\maximize}&{\quad\Re(\r^\h\z)}\\
    \st\quad&\quad
    \z^\h\G\z\leq\epsilon,\\
    (\forall p\in[P])&\quad
    \z^\h\e_p\e_p^\t\z
    ={1}/{P}\,,
\end{aligned}
\end{equation}
where $\e_p=\left[\I\right]_{:,p}$ is the $p$-th standard basis vector.
We will now propose a method to solve \eqref{prob:phased}
via \gls{sdp}.

\begin{proposition}
A solution to \cref{prob:phased}
is given by
\begin{equation*}
    \z_\star=\tilde{\z}\frac{\tilde{\z}^\h\r}{\norm{\tilde{\z}^\h\r}},\quad
    \tilde{\z}=
    \sqrt{\sigma_{\max}({\tilde{\Z}})}\v_{{\max}}({\tilde{\Z}})\,,
\end{equation*}
where
${\sigma_{\max}({\tilde{\Z}})}$ and
$\v_{{\max}}({\tilde{\Z}})$
are the largest eigenvalue
and the corresponding eigenvector, respectively,
of a solution
$\tilde{\Z}$ to the following nonconvex \gls{sdp} problem:
\begin{subequations}\label[problem]{prob:sdp}
\begin{align}
    \label{eq:sdp-obj}
    \underset{\Z\in\SS^P}{\maximize}&{\quad
    \langle \r\r^\h, \Z \rangle}\\
    \st\quad&\quad
    \langle \G, \Z \rangle\leq\epsilon,\\
    (\forall p\in[P])&\quad\label{eq:sdp-mag1}
   \langle \e_p\e_p^\h, \Z \rangle={1}/{P}\,,\\
   \label{eq:psd}
   &\quad \Z\succcurlyeq\Zero\,,\\
   &\quad\rank(\Z)\leq1\,.\label{eq:rank1}
\end{align}
\end{subequations}
\end{proposition}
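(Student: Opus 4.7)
The plan is to apply the standard semidefinite lifting $\Z=\z\z^\h$ and then exploit global phase invariance to bridge the mismatch between the linear objective of \cref{prob:phased} and the quadratic objective of \cref{prob:sdp}. First, I would observe that the feasible set of \cref{prob:phased} is invariant under rotations $\z\mapsto e^{j\phi}\z$: the per-element modulus constraints $\norm{[\z]_p}^2=1/P$ and the Hermitian form $\z^\h\G\z$ are untouched. Consequently, for any feasible $\z$ one can pick $\phi$ so that $\Re(\r^\h e^{j\phi}\z)=\norm{\r^\h\z}$, which shows that the optimum of \cref{prob:phased} equals $\max\lbrace\norm{\r^\h\z}:\z\ \text{feasible}\rbrace$.

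Next, I would identify this with \cref{prob:sdp}. For a rank-one PSD matrix $\Z=\z\z^\h$, we have
\begin{align*}
    \langle\r\r^\h,\Z\rangle&=\r^\h\z\z^\h\r=\norm{\r^\h\z}^2,\\
    \langle\G,\Z\rangle&=\z^\h\G\z,\\
    \langle\e_p\e_p^\h,\Z\rangle&=\norm{[\z]_p}^2.
\end{align*}
Hence the correspondence $\z\leftrightarrow\z\z^\h$ is a bijection between the feasible set of \cref{prob:phased} (modulo global phase) and the set of rank-one PSD matrices satisfying the constraints of \cref{prob:sdp}; moreover, the objective $\langle\r\r^\h,\Z\rangle$ in \cref{prob:sdp} equals the square of $\norm{\r^\h\z}$. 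Therefore \cref{prob:sdp} is equivalent to the squared version of \cref{prob:phased}, and its optimal value is the square of the optimum of \cref{prob:phased}.

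The next step is to extract a vector-valued minimizer from any solution $\tilde{\Z}$ of \cref{prob:sdp}. Since $\tilde{\Z}\succcurlyeq\Zero$ and $\rank(\tilde{\Z})\leq1$, the spectral decomposition of $\tilde{\Z}$ has a single nonzero eigenpair, so $\tilde{\Z}=\tilde{\z}\tilde{\z}^\h$ for $\tilde{\z}=\sqrt{\sigma_{\max}(\tilde{\Z})}\v_{\max}(\tilde{\Z})$. This vector inherits feasibility in \cref{prob:phased}: the modulus constraints follow from \eqref{eq:sdp-mag1}, and the self-interference bound from the constraint on $\langle\G,\Z\rangle$. However, $\tilde{\z}$ is determined only up to a global phase, so $\Re(\r^\h\tilde{\z})$ may not attain $\norm{\r^\h\tilde{\z}}$. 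The remedy is to align phases by setting $\z_\star=\tilde{\z}\,(\tilde{\z}^\h\r)/\norm{\tilde{\z}^\h\r}$, which is still feasible (the rotation preserves all constraints) and for which a direct calculation gives $\r^\h\z_\star=\norm{\r^\h\tilde{\z}}\in\Rp$, hence $\Re(\r^\h\z_\star)=\sqrt{\langle\r\r^\h,\tilde{\Z}\rangle}$, matching the square root of the optimum of \cref{prob:sdp} and therefore the optimum of \cref{prob:phased}.

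The main obstacle, in my view, is not any individual calculation but the careful bookkeeping of the phase degree of freedom: the SDP lifting loses the linear/real structure of the original objective, and one must verify both that (i) any feasible $\z$ achieves $\Re(\r^\h\z)=\norm{\r^\h\z}$ after a rotation that preserves feasibility, and (ii) the explicit formula $\z_\star=\tilde{\z}\,(\tilde{\z}^\h\r)/\norm{\tilde{\z}^\h\r}$ performs exactly this rotation on the candidate extracted from $\tilde{\Z}$. The degenerate case $\tilde{\z}^\h\r=0$ would need a brief remark: it can occur only when the optimal value is zero, in which case any feasible $\z$ (e.g., $\tilde{\z}$ itself) is optimal and the normalization is immaterial.
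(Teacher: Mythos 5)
Your proposal is correct and follows essentially the same route as the paper's proof: exploit the phase invariance of the constraints to replace $\Re(\r^\h\z)$ by $\norm{\r^\h\z}$ (and hence its square), lift via $\z^\h\M\z=\langle\M,\z\z^\h\rangle$ to obtain \cref{prob:sdp}, and recover $\z_\star$ from the rank-one solution by the phase-aligning factor $\tilde{\z}^\h\r/\norm{\tilde{\z}^\h\r}$. Your extra remarks on the bijection modulo global phase and the degenerate case $\tilde{\z}^\h\r=0$ are fine additions but not a different argument.
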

\begin{proof}
We first note that
any solution $\zsol$ to \eqref{prob:phased} must fulfill
$\Re(\r^\h\zsol)=\norm{\r^\h\zsol}$.
This follows from the fact that, for any feasible $\z_\mathrm{f}$,
we can construct another feasible $\tilde{\z}_\mathrm{f}$ such that
\begin{equation*}
    \Re(\r^\h\z_\mathrm{f})\leq
    \norm{\r^\h\z_\mathrm{f}}=
    \Re(\r^\h\tilde{\z}_\mathrm{f})\,,\quad
    \tilde{\z}_\mathrm{f}=\z_\mathrm{f}\left({\z_\mathrm{f}^\h\r}/{\norm{\z_\mathrm{f}^\h\r}}\right).
\end{equation*}
Consequently, we can substitute the objective function in \eqref{prob:phased}
with $\norm{\r^\h\z}^2=\z^\h\r\r^\h\z$.
In addition to this, we note the trace circular property
($\forall\M\in\SS^M$, $\forall\z\in\CC^{M}$):
\begin{equation*}
    \z^\h\M\z=
    \trace(\z^\h\M\z)=
    \trace({\z\z^\h}\M)=
    \langle\M,\z\z^\h\rangle\,,
\end{equation*}
which allows us to express \cref{prob:phased}
in terms of inner products as \eqref{prob:sdp},
such that $\tilde{\Z}=\zsol\zsol^\h$
and $\zsol$ solves \eqref{prob:phased}.
\end{proof}

A convex relaxation of \cref{prob:sdp} can be obtained
by dropping the nonconvex constraint \eqref{eq:rank1}.
A solution $\tilde{\Z}$ to the convex \gls{sdp}
problem
\eqrefrange{eq:sdp-obj}{eq:psd} is not guaranteed to have rank one,
but
we can further elaborate on $\rank(\tilde{\Z})$ by relaxing
constraint~\eqref{eq:sdp-mag1}:
\begin{equation}\label[problem]{prob:sdpack}
\begin{aligned}
    \underset{\Z\in\SS^P}{\maximize}&{\quad
    \langle \r\r^\h, \Z \rangle}\\
    \st\quad&\quad
    \langle \G, \Z \rangle\leq\epsilon,\\
    (\forall p\in[P])&\quad
   \langle \e_p\e_p^\h, \Z \rangle\leq{1}/{P}\,,\\
   &\quad \Z\succcurlyeq\Zero\,.
\end{aligned}
\end{equation}
\cref{prob:sdpack} is also known as
a \emph{semidefinite packaging} problem,
and Sagnol~\cite{sagnol2011class} proved that it
always has
a rank-one solution.
Therefore, the solution $\tilde{\Z}$
to \eqrefrange{eq:sdp-obj}{eq:psd} will also have rank one
if it coincides with a solution to
\eqref{prob:sdpack} for
given $\r$, $\G$, and $\epsilon$.

Furthermore,
we know that
$\exists\,\epsilon_\star\leq\r^\h\G\r$ such that
$\forall\epsilon\geq\epsilon_\star$
$\rank(\tilde{\Z})=1$ (cf. \cref{rm:max-val}). Unfortunately, $\epsilon_\star$ cannot be easily
narrowed down further, but we can still assess the rank of $\tilde{\Z}$
through the ratio between its spectral and nuclear norms:
\begin{equation}\label{eq:rank1metric}
    \Upupsilon(\tilde{\Z})\coloneqq\frac{\|\tilde{\Z}\|_2}{\|\tilde{\Z}\|_\ast}=
    \frac{\sigma_{\max}({\tilde{\Z}})}{\trace(\tilde{\Z})}\in\left[{1}/{P}\,,\;1\right],
\end{equation}
for which we know that $\Upupsilon(\tilde{\Z})=1\iff\rank(\tilde{\Z})=1$.
In \cref{sec:results}, we use this fact
to evaluate the feasibility of \cref{alg:sdr} in our scenarios.

\section{Performance bounds and parameter selection}
\label{sec:adc}

The proposed \crefrange{alg:lagrange}{alg:sdr} in \cref{sec:solution} aim to
reduce the \gls{maxsi}
$\mS(\C,\W)$
in \eqref{eq:maxsi} according to the
parameters $\varepsilon$ and $\beta$.
We analyze now the effect of \gls{si}
before the \glspl{adc} for the hybrid-beamforming system in \cref{sec:sysmodel},
in order to provide recommendations for the choice of said parameters.

In \cref{sec:bound}, we derive a theoretical
bound on the variance of the quantization noise introduced by the \gls{adc},
while we also take into account
other dynamic-range limitations
and saturation.
In \cref{sec:snr}, we extend the derived bound to the
\gls{snr} for analog \gls{ofdm} radar,
a popular \gls{isac} setup~\cite{zhang2022enabling,liu2020joint,wymeersch20226g,liu2022integrated,baquerobarneto2019fullduplex,riihonen2023fullduplex,girotodeoliveira2022joint}.
Finally, in \cref{sec:si-selection} we discuss how to select the target
\gls{si} $\varepsilon$ based on the derived bounds 
and our signal knowledge.

\subsection{Self-interference impact at the
\texorpdfstring{\acl{adc}}{ADC}}
\label{sec:bound}

\glspl{adc} have limited dynamic range,
which can be
critical
for \gls{isac} in the face of \gls{si}.
In particular, the \gls{adc} bit resolution $Q$ causes
signal quantization, modeled
as $u_{k}[i]$ in \eqref{eq:digital}
by following the well-established assumption
of uniform white quantization noise~\cite{gray1990quantization}.
That is,  we assume $u_{k}[i]$ is
independent of the \gls{adc}'s input signal,
with \gls{iid} $\Re(u_{k}[i])$ and $\Im(u_{k}[i])$ 
drawn from $\setU(-\frac{\Delta}{2},\frac{\Delta}{2})$. 
The quantization step $\Delta$ can
then be expressed in terms of
the \gls{adc}'s full-scale value $\yfs$ and the bit resolution $Q$ as $\Delta={2\yfs}/{2^Q}$ , for which the
quantization noise power equals
\begin{equation}\label{eq:qnoise}
\sgQ^2\coloneqq\EE[\norm{u_{k}[i]}^2]=2\frac{\Delta^2}{12}=
\frac{2\yfs^2}{3}{2^{-2Q}}=
\bQ\yfs^2.
\end{equation}

We write $\bQ$ in \eqref{eq:qnoise}
to stress the relation
$\sgQ^2\propto\yfs^2$, as well as to
extend our analysis to other sources
of dynamic-range limitations,
such as logarithmic \glspl{adc}
or finite-precision arithmetic.
With this in mind,
we can minimize $\sgQ^2$
via gain control
by setting $\yfs$ to the smallest possible value above saturation.
Under our \gls{tdd} assumption,
we can leverage
the absence of uplink signals and
the full knowledge of both the transmitted
$\x(t)$ and the calibrated \gls{si}
channel $\S(t)$ to select
\begin{equation}
\label{eq:yfs}
    \yfs=\gamma\max_{
    \substack{t\in[0,T+\Tc]\\ k\in[K]}}
    \norm{\sum_{\ell=1}^L\left(\left(\c_{\imath(k)}^\h\S\,\w_{\jmath(\ell)}\right)\ast x_\ell\right)(t)}\,,
\end{equation}
where we choose a back-off term
$\gamma>1$ to account for the unknown radar channel
$\Hr(t)$.
The condition to avoid saturation is
\begin{equation*}
    \gamma\geq
    \frac{\max_{t,k}
    \norm{\sum_{\ell=1}^L\left(\left(\c_{\imath(k)}^\h\left(\Hr+\S\right)\w_{\jmath(\ell)}\right)\ast x_\ell\right)(t)}}
    {\max_{t,k}
    \norm{\sum_{\ell=1}^L\left(\left(\c_{\imath(k)}^\h\S\,\w_{\jmath(\ell)}\right)\ast x_\ell\right)(t)}}\,.
\end{equation*}
Therefore, we can set $\gamma$ fairly close to 1 in practice,
since the contribution from the \gls{si} channel
$\S(t)$
is typically orders of magnitude larger than the
backscattered power over $\Hr(t)$.

We will now show how
the sampling scheme in
\eqrefrange{eq:qnoise}{eq:yfs}
yields an upper bound on
the \gls{adc} noise as a function of the \gls{maxsi}. The derived bound
only depends on generic
\gls{tx}-signal
properties,
namely the
\gls{tx} power $\Pt$ and
the \gls{papr} of each signal,
($\forall\ell\in[L]$)
$\rho_{x_\ell}$:
\begin{align}
    \label{eq:Pt}
    \Pt&\coloneqq\sum_{\ell=1}^L\frac{1}{T}\int_0^T\norm{x_\ell(t)}^2\,dt=
    \frac{1}{T}\sum_{\ell=1}^L
    \norm[2]{x_\ell}^2\,,\\
    \label{eq:papr}
    \rho_{x_\ell}&\coloneqq T{\infnorm{x_\ell}^2}/{\norm[2]{x_\ell}^2}\,.
\end{align}

\begin{lemma}\label{thm:quant-bound}
Consider the system model in \cref{sec:sys-opt},
with the \gls{maxsi} $\mS(\C,\W)$ in \eqref{eq:maxsi},
the \gls{rx} signal $\y(t)$ in \eqref{eq:rx-signal},
sampled at $1/T_{\mathrm S}$ to obtain
$\yH[i]$ in \eqref{eq:digital}.
Assuming quantization noise,
\gls{rx} gain control as per \eqrefrange{eq:qnoise}{eq:yfs},
and the \gls{tx}-signal properties in \eqrefrange{eq:Pt}{eq:papr},
then we have:
\begin{align}
    \nonumber
    &\EE\left[\Norm[2]{\yH[i]-\y(iT_\mathrm{S})}^2\right]=
    \sum_{k=1}^K\left(\sgT^2+
    \sgQ^2\right)\leq\\
    &K\sgT^2+
    K\Pt{\gamma^2}\bQ
    \mS^2(\C,\W)
   \sum_{\ell=1}^L\rho_{x_\ell}.\label{eq:noise-bound}
\end{align}
\end{lemma}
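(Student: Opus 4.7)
The plan is to handle the equality and the inequality separately. For the equality, I would substitute $\yh_k[i] - y_k(iT_\mathrm{S}) = v_k[i] + u_k[i]$ from \eqref{eq:digital} and exploit the assumed independence and zero mean of $v_k[i]$ and $u_k[i]$: this gives $\EE[\norm{v_k[i]+u_k[i]}^2] = \sgT^2 + \sgQ^2$ for every $k$, and summing over $k \in [K]$ produces the middle expression in \eqref{eq:noise-bound}.

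For the upper bound, since $\sgQ^2 = \bQ\yfs^2$ by \eqref{eq:qnoise}, it suffices to bound $\yfs^2$ by $\gamma^2\Pt\mS^2(\C,\W)\sum_{\ell}\rho_{x_\ell}$. Starting from \eqref{eq:yfs}, I would build a chain of three estimates. First, the triangle inequality pulls the sum over $\ell$ outside the absolute value, after which Young's convolution inequality $\norm{(f\ast x_\ell)(t)} \leq \norm[1]{f}\infnorm{x_\ell}$ (with $f(\tau) = \c_{\imath(k)}^\h\S(\tau)\w_{\jmath(\ell)}$) yields, for every $t$ and $k$,
\begin{equation*}
    \left|\sum_{\ell=1}^L \left((\c_{\imath(k)}^\h\S\w_{\jmath(\ell)})\ast x_\ell\right)(t)\right| \leq \sum_{\ell=1}^L \int_0^{\Tc}\!\norm{\c_{\imath(k)}^\h\S(\tau)\w_{\jmath(\ell)}}\,d\tau\;\infnorm{x_\ell}.
\end{equation*}
Second, by the definition of $\mS(\C,\W)$ in \eqref{eq:maxsi}, every time integral above is at most $\mS(\C,\W)$, uniformly in $k,\ell$; taking the maximum over $t$ and $k$ then gives $\yfs \leq \gamma\,\mS(\C,\W)\sum_{\ell=1}^L \infnorm{x_\ell}$.

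The third step is a weighted Cauchy--Schwarz: using \eqref{eq:papr} in the form $\infnorm{x_\ell} = \sqrt{\rho_{x_\ell}/T}\,\norm[2]{x_\ell}$, I write
\begin{equation*}
    \left(\sum_{\ell=1}^L \infnorm{x_\ell}\right)^{\!2} \leq \frac{1}{T}\left(\sum_{\ell=1}^L \rho_{x_\ell}\right)\!\left(\sum_{\ell=1}^L \norm[2]{x_\ell}^2\right) = \Pt\sum_{\ell=1}^L\rho_{x_\ell},
\end{equation*}
where the last equality uses \eqref{eq:Pt}. Squaring the bound on $\yfs$ and combining the three estimates yields $\yfs^2 \leq \gamma^2\Pt\mS^2(\C,\W)\sum_\ell\rho_{x_\ell}$; multiplying by $\bQ$ and summing the (constant in $k$) term $\sgT^2 + \sgQ^2$ over $k \in [K]$ delivers \eqref{eq:noise-bound}.

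The main obstacle, and the only step that is not routine, is identifying the correct weighting in the Cauchy--Schwarz: splitting $\infnorm{x_\ell}$ as $\sqrt{\rho_{x_\ell}/T}\cdot\norm[2]{x_\ell}$ is what cleanly separates the PAPR factors (which collect into $\sum_\ell \rho_{x_\ell}$) from the energy factors (which collapse into $\Pt$ via \eqref{eq:Pt}). Any other grouping leaves either a spurious factor $L$ or a non-separable mixture of $\infnorm{\cdot}$ and $\norm[2]{\cdot}$, so the entire argument hinges on this particular decomposition.
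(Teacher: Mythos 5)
Your proposal is correct and follows essentially the same route as the paper's proof: the equality via independence of thermal and quantization noise, then bounding $\yfs$ through the triangle inequality, Young's convolution inequality, the definition of $\mS(\C,\W)$ (which dominates the beams selected by $\imath,\jmath$ since their indices lie in $[K']\times[L']$), and finally the PAPR substitution $\infnorm{x_\ell}=\sqrt{\rho_{x_\ell}/T}\,\norm[2]{x_\ell}$ followed by Cauchy--Schwarz and \eqref{eq:Pt}. The only cosmetic difference is that the paper removes the dependence on the beam mappings in two explicit maximization steps, whereas you absorb it directly into the definition of the \gls{maxsi}; the argument is the same.
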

\begin{proof}
The proof is given in \cref{apx:bound}.
\end{proof}
\begin{corollary}\label{thm:discrete}
Let us assume the same system model as for \cref{thm:quant-bound}, but with the channel propagation represented
by discrete convolution with $\bar{\H}[i]=\bar{\Hr}[i]+\bar{\S}[i]$, i.e., ($\forall k\in[K]$):
\begin{equation*}
    \bar{y}_k[i]=\sum_{\ell=1}^L\left(\left(\c_k\bar{\H}\w_\ell\right)\ast\bar{x}_\ell\right)[i],
    \;\;
    (\forall \ell\in[L])\;
    \bar{x}_\ell[i]
    \coloneqq x_\ell(iT_\mathrm{S})\,.
\end{equation*}
Then, \eqref{eq:noise-bound} still holds with $J=\left\lfloor{(T+\Tc)}/{T_{\mathrm{S}}}\right\rfloor$ and
$\mathfrak{m}_{\bar{\S}}(\C,\W)=\max_{k,\ell}\sum_{i=1}^{J}
\norm{\c_k\bar{\S}[i]\w_\ell}$.
\end{corollary}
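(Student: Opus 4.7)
The plan is to transfer the proof of \cref{thm:quant-bound} verbatim, replacing every continuous-time operation by its discrete counterpart and verifying that each inequality transfers in the correct direction. The stochastic noise decomposition $\EE[\Norm[2]{\yH[i]-\y(iT_\mathrm{S})}^2]=\sum_{k=1}^K(\sgT^2+\sgQ^2)$ and the quantization-noise identity \eqref{eq:qnoise} depend only on the independence and uniformity of the quantization-noise model, so they hold unchanged for $\bar{\yH}[i]$ and $\bar{y}_k[i]$. The gain-control rule \eqref{eq:yfs} is simply restated as $\bar{y}_\mathrm{fs}=\gamma\max_{i,k}|\sum_{\ell=1}^L((\c_{\imath(k)}^\h\bar{\S}\,\w_{\jmath(\ell)})\ast\bar{x}_\ell)[i]|$, which is again the tightest admissible scale under the \gls{tdd} assumption.

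The pivotal inequality in the continuous proof is Young's convolution inequality $\infnorm{f\ast g}\le\norm[1]{f}\infnorm{g}$. In the discrete setting I would invoke its $\ell^p$ analogue, $\max_i|(\bar{f}\ast\bar{g})[i]|\le(\sum_{j=1}^J|\bar{f}[j]|)\max_j|\bar{g}[j]|$, applied to $\bar{f}[i]=\c_{\imath(k)}^\h\bar{\S}[i]\w_{\jmath(\ell)}$ and $\bar{g}[i]=\bar{x}_\ell[i]$. Combining this with the triangle inequality over the $L$ streams yields
\begin{equation*}
\bar{y}_\mathrm{fs}\le\gamma\,\mathfrak{m}_{\bar{\S}}(\C,\W)\sum_{\ell=1}^L\max_i|\bar{x}_\ell[i]|\,,
\end{equation*}
which exactly mirrors the continuous-time intermediate bound but with the new $\ell^1$-flavored quantity $\mathfrak{m}_{\bar{\S}}(\C,\W)=\max_{k,\ell}\sum_{i=1}^J|\c_k\bar{\S}[i]\w_\ell|$ replacing the integral in \eqref{eq:maxsi}.

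The remaining step is to re-express the sum $\sum_\ell\max_i|\bar{x}_\ell[i]|$ in terms of the \emph{continuous} transmit-signal properties $\Pt$ and $\rho_{x_\ell}$ that still appear on the right-hand side of \eqref{eq:noise-bound}. Since $\bar{x}_\ell[i]=x_\ell(iT_\mathrm{S})$, we have $\max_i|\bar{x}_\ell[i]|\le\infnorm{x_\ell}$, so the Cauchy--Schwarz-type step of the continuous proof, namely $(\sum_\ell\infnorm{x_\ell})^2\le\Pt\sum_\ell\rho_{x_\ell}$, transfers without modification to $(\sum_\ell\max_i|\bar{x}_\ell[i]|)^2\le\Pt\sum_\ell\rho_{x_\ell}$. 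Squaring the bound on $\bar{y}_\mathrm{fs}$, multiplying by $\bQ$, and summing over the $K$ receive chains then reproduces \eqref{eq:noise-bound} with $\mathfrak{m}_{\bar{\S}}$ in place of $\mS$.

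The only genuine obstacle is bookkeeping: one must be careful that (i) the sampling step $\bar{x}_\ell[i]=x_\ell(iT_\mathrm{S})$ only \emph{weakens} the $\infty$-norm and hence preserves the direction of every inequality, and (ii) the summation range $i\in[J]$ with $J=\lfloor(T+\Tc)/T_\mathrm{S}\rfloor$ captures the full support of the discrete convolution so that the discrete Young inequality is applied on a complete sequence. Both points hold automatically under the model assumptions, so the corollary follows by the same chain of steps as \cref{thm:quant-bound}.
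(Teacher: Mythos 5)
Your proposal is correct and follows essentially the same route as the paper: the paper's proof is precisely the observation that Young's convolution inequality holds for the discrete convolution, so the argument of \cref{thm:quant-bound} carries over with $\mathfrak{m}_{\bar{\S}}$ in place of $\mS$. Your additional bookkeeping (the bound $\max_i|\bar{x}_\ell[i]|\leq\infnorm{x_\ell}$ and the unchanged Cauchy--Schwarz/\acs{papr} step) is exactly what makes that one-line remark rigorous.
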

\begin{proof}
    The proof follows from the fact that Young's inequality in
    \cref{thm:quant-bound}'s proof
    also applies to the discrete
    convolution~\cite[Theorem (20.18)]{hewitt2012abstract}.
\end{proof}

\cref{thm:quant-bound} allows us to assess
the quantization noise induced
by multipath \gls{si}
for given beamforming codebooks,
which we validate in
\cref{sec:results}
via simulations,
thanks to \cref{thm:discrete}.
These results also provide us with a rule to select the
target \gls{si} $\varepsilon$ in Problems \eqref{prob:joint} and \eqref{prob:split}, as we discuss
in \cref{sec:si-selection}.

As for $\beta$ in \eqref{eq:opt-params},
while $\beta=1$ is
a sensible choice (cf. \cref{sec:split}),
other values of $\beta$ may be preferred
if the system is susceptible to \gls{rx} antenna saturation.
In this sense, we provide the following \cref{pr:sat} as
a rule to choose $\beta$ for given $\varepsilon$
so that ($\forall m\in[M]$) the peak power at the
$m$-th \gls{rx} antenna remains
below a certain $\Psat$, i.e.:
\begin{equation}\label{eq:sat}
    \infnorm{\mathring{y}_m}^2\leq\Psat\,,
    \;\;\mathring{y}_m(t)\coloneqq
    \sum_{\ell=1}^L\left(\left(\e_{m}^\t\H\,\w_{\jmath(\ell)}\right)\ast x_\ell\right)(t)\,.
\end{equation}

\begin{proposition}\label{pr:sat}
Assume the system model from \cref{thm:quant-bound}
and the optimization scheme given by
\eqrefrange{prob:split}{eq:opt-params}.
Then, the condition \eqref{eq:sat} is satisfied if
\begin{equation*}
    \varepsilon\beta=\frac{\Psat}{\gamma^2\Pt
   \sum_{\ell\in[L']}\rho_{x_\ell}
   \max_{m\in[M]}{[\Gr]_{m,m}}}\,.
\end{equation*}
\end{proposition}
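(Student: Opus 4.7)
The plan is to mirror the proof of \cref{thm:quant-bound}, replacing the \gls{rx} beam $\c_{\imath(k)}$ by the standard basis vector $\e_m$ (which plays the role of a ``dummy beamformer'' for a single antenna element) and specialising the bound to a peak-power constraint rather than to the \gls{adc} quantization noise. The back-off factor $\gamma$ was introduced precisely so that the contribution of the unknown $\Hr$ can be absorbed into a multiplicative margin relative to the \gls{si} channel $\S(t)$. Thus, my first step is to appeal to the same reasoning used for \eqref{eq:yfs}, applied antenna-wise, to obtain
\begin{equation*}
    \infnorm{\mathring{y}_m}^2\leq\gamma^2\,
    \infnorm{\sum_{\ell=1}^L\left(\left(\e_m^\t\S\,\w_{\jmath(\ell)}\right)\ast x_\ell\right)}^2.
\end{equation*}

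Next, I would apply the convolution inequality $\infnorm{f\ast g}\leq\norm[1]{f}\infnorm{g}$ (Young's inequality in the $L^1$--$L^\infty$ case, as used in the proof of \cref{thm:quant-bound}) together with the triangle inequality to obtain
\begin{equation*}
    \infnorm{\sum_{\ell}\left(\e_m^\t\S\,\w_{\jmath(\ell)}\ast x_\ell\right)}
    \leq\sum_{\ell}\norm[1]{\e_m^\t\S\,\w_{\jmath(\ell)}}\infnorm{x_\ell}.
\end{equation*}
Squaring and invoking Cauchy--Schwarz with weights $\norm[2]{x_\ell}^2$ (exactly the trick that links $\Pt$ and $\rho_{x_\ell}$ in \cref{thm:quant-bound}) gives
\begin{equation*}
    \left(\sum_\ell\norm[1]{\e_m^\t\S\,\w_{\jmath(\ell)}}\infnorm{x_\ell}\right)^2
    \leq\Pt\sum_{\ell}\norm[1]{\e_m^\t\S\,\w_{\jmath(\ell)}}^2\rho_{x_\ell}.
\end{equation*}

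The third step is to reuse the pointwise bound that underlies \cref{prop:paralel}, applied with $\e_m$ in place of $\c_k$, namely $\norm[1]{\e_m^\t\S\,\w_{\jmath(\ell)}}\leq\sqrt{\e_m^\t\Gr\e_m}\,\sqrt{\w_{\jmath(\ell)}^\h\Gt\w_{\jmath(\ell)}}$, so that $\norm[1]{\e_m^\t\S\,\w_{\jmath(\ell)}}^2\leq[\Gr]_{m,m}\,\w_{\jmath(\ell)}^\h\Gt\w_{\jmath(\ell)}$. The \gls{cissir} design from \eqref{eq:opt-params} guarantees $\w_\ell^\h\Gt\w_\ell\leq\varepsilon\beta$ for every $\ell\in[L']$, so combining everything and maximising over $m\in[M]$ yields
\begin{equation*}
    \infnorm{\mathring{y}_m}^2\leq
    \gamma^2\Pt\,\Bigl(\sum_{\ell}\rho_{x_\ell}\Bigr)\,
    \max_{m\in[M]}[\Gr]_{m,m}\,\varepsilon\beta.
\end{equation*}
Substituting the prescribed value of $\varepsilon\beta$ from the proposition then makes the right-hand side equal to $\Psat$, establishing \eqref{eq:sat}.

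The main obstacle I anticipate is being rigorous about the $\gamma$-based step: in \cref{sec:bound} the back-off is defined through a ratio of suprema involving \gls{rx} beams $\c_{\imath(k)}$, not element-wise steering $\e_m$, so I will need to argue (or state as an implicit assumption analogous to the one justifying \eqref{eq:yfs}) that $\gamma$ also dominates the per-antenna ratio between $\H$- and $\S$-driven peaks, since $\Hr(t)$ is orders of magnitude weaker than $\S(t)$. Apart from that, the remaining inequalities are Young's, Cauchy--Schwarz, and the already-proved integral-split bound, all routine once the structural parallel with \cref{thm:quant-bound} is identified.
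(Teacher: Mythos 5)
Your proposal is correct and follows essentially the same route as the paper: the paper's proof simply observes that the antenna-wise signals are the special case $\C=\I$ ($K=K'=M$) of \eqref{eq:rx-signal}, reruns the Young/Cauchy--Schwarz chain from \cref{thm:quant-bound}, and applies \cref{prop:paralel} to $\mS(\I,\W)$ — which is exactly your per-element argument with $\e_m$ in place of $\c_k$, including the CISSIR constraint $\w_\ell^\h\Gt\w_\ell\leq\varepsilon\beta$. The $\gamma$ caveat you flag is handled by the paper in the same implicit way (the back-off is assumed to also absorb the $\Hr$ contribution in the unbeamformed, per-antenna setting), so your explicit acknowledgment of that assumption is if anything slightly more careful than the original proof.
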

\begin{proof}
    Noting that
    $\lbrace\mathring{y}_m(t)\rbrace_{m=1}^M$ is a special case of \eqref{eq:rx-signal}
    for $K=K'=M$ and $\C=\I\in\CC^{M\times M}$,
    the proof follows from the same arguments
    as in \cref{thm:quant-bound}'s proof,
    and from
    the application of \cref{prop:paralel} to
    $\mS(\I,\W)$.
\end{proof}

\subsection{Signal-to-noise ratio for analog
\texorpdfstring{\acs{ofdm}}{OFDM}
radar}
\label{sec:snr}

\cref{thm:quant-bound} provides a bound on noise variance as a function of
the \gls{maxsi}
$\mS(\C,\W)$. Based on this bound, we focus now on
analog beamforming and \gls{ofdm} radar
to extend our performance analysis
to the sensing \gls{snr}, which is the main performance indicator for
sensing detection and estimation~\cite{richards2010principles}.

Under analog beamforming, we have $K=L=1$ \gls{rf} chains.
As a result, we only transmit a signal $\x(t)=x(t)\in\CC$, and
we receive $\y(t)=y(t)\in\CC$ as:
\begin{equation}
    \label{eq:analog}
    y(t)=
    ((\underbrace{\co^\h\S\wo}_{\eqqcolon\hsi}+\underbrace{\co^\h\Hr\wo}_{\eqqcolon\hr})\ast x)(t)=
    \ysi(t)+\yr(t)\,,
\end{equation}
where $\co\in\CC^M$ and $\wo\in\CC^N$ form our selected 
\gls{tx}-\gls{rx} beamforming pair.
This beamforming pair yields
the single-antenna equivalent channels $\hsi(t)$ and $\hr(t)$, which we
assume passband filtered according to the
signal bandwidth $B$.

Furthermore, if we employ \gls{ofdm} waveforms, we can assume our \gls{tx} signal to be stationary with almost flat
power spectral density $X(f)$~\cite{baquerobarneto2019fullduplex,richards2010principles}. In other words, ($\forall f\in[\fc-B/2,\fc+B/2]$)
$X(f)\approx\Pt/B$, and hence:
\begin{align}
    \EE[\norm{\yr(t)}^2]=\zeta\frac{\Pt}{B}\norm[2]{\hr}^2&\approx
    \int_{\fc-B/2}^{\fc+B/2}\frac{\Pt}{B}|H_{\mathrm r}(f)|^2\,df\,,\;
    \label{eq:ypow}
\end{align}
with $\zeta\approx1$. Assuming favorable Nyquist conditions
so that $\EE[\norm{\yr(i\Ts)\,}^2]=\EE[\norm{\yr(t)}^2]$, we can
combine \eqref{eq:ypow} with \cref{thm:quant-bound} 
to bound
the \gls{snr} of
$\yr(t)$ from \eqref{eq:digital} as:
\begin{equation}
    \label{eq:snr}
    \mathrm{SNR}=
    \frac{\EE[\norm{\yr(t)}^2]}{\sgT^2+\sgQ^2}\geq
    \frac{{\zeta}\norm[2]{\hr}^2\Pt/B}
    {\sgT^2+\gamma^2\bQ\mS^2(\C,\W)\Pt\rho_x}\,,
\end{equation}
which relates to the \gls{crlb} for range estimation with
a frequency-flat matched filter~\cite{richards2010principles} as
\begin{equation}
    \label{eq:crlb}
    \mathrm{CRLB}=
        \frac{c^2}{4\,\mathrm{SNR}\cdot B^2_{\mathrm{rms}}}=
    \frac{c^2}{4\,\mathrm{SNR}\cdot B^2\pi^2/3}\,,
\end{equation}
with $c$ the speed of light.
The expression above considers Gaussian noise,
which is not true in general for $u_k[i]$. However,
we can invoke the central limit theorem
if the matched filter is long enough,
as it is expected for \gls{ofdm} radar~\cite{baquerobarneto2019fullduplex,girotodeoliveira2022joint},
to assume that the filtered noise is
approximately Gaussian. Hence,
we may still use \eqref{eq:crlb} in practice to assess sensing
performance
via \eqref{eq:snr} with respect to the \gls{maxsi}
$\mS(\C,\W)$
and the target \gls{si} $\varepsilon$.

Nevertheless, choosing $\varepsilon$
from \eqref{eq:snr} is challenging in practice, as
we often lack the following information:
\begin{enumerate}
    \item The radar channel $\Hr(t)$, and thus $\hr(t)$.
    Even if we know $\Hr(t)$, we cannot compute $\norm[2]{\hr}$
    before optimizing ($\co$, $\wo$).
    However,
    we can approximate it from the reference beamforming pair
    ($\bo$, $\ao$) as 
    $\norm[2]{\hr}\approx\norm[2]{\bo^\h\Hr\ao}$,
    if the codebook deviations $\stx^2$ and $\srx^2$ in \eqref{eq:cb-dev} are low enough.
    \item The \gls{papr} $\rho_x$, since the
    signal $x(t)$ may also be unknown during optimization.
\end{enumerate}
To address this issue,
we suggest a more practical approach
in \cref{sec:si-selection}.

\subsection{Selection of the target self-interference}
\label{sec:si-selection}

To set the target \gls{si}
$\varepsilon$ in \crefrange{alg:lagrange}{alg:sdr},
we first suggest fixing a target quantization noise
$\sigma_\star^2$, either from \eqref{eq:snr} if
$\norm[2]{\hr}^2$ is known, or with respect to
the thermal noise $\sgT^2$ otherwise.
From \cref{thm:quant-bound},
we select the target \gls{si} as
\begin{equation}
    \label{eq:si-choice}
    {\varepsilon}
    =
    \frac{\sigma_\star}
    {\gamma\sqrt{\bQ\Pt\brx
    }}\,,
\end{equation}
which guarantees $\sgQ^2\leq\sigma_\star^2$.
The estimated \gls{papr} $\brx>0$
is computed from the available information on $\x(t)$ during optimization.
Some options to compute $\brx$ include:

\subsubsection{Matched optimization}
In the best-case scenario where we do know $\x(t)$,
we can just choose $\brx=\sum_{\ell=1}^L\rho_{x_\ell}$ in \eqref{eq:si-choice} and solve \eqref{prob:split} only for the vectors given
by $\imath$ and $\jmath$.
This option may be relevant if the optimization algorithm is fast enough to be run for every new \gls{tx} symbol, e.g., if it has
a semi-closed form as in \cref{sec:taper}.

\subsubsection{Average optimization}
\label{sec:avg-opt}
In certain setups, it might suffice to ensure that the quantization noise variance
is below $\sigma_\star^2$ only for the average \gls{tx} symbol. In this case,
the codebooks can be optimized a priori with
$\brx=\sum_{\ell=1}^L\EE[\rho_{x_\ell}]$.

\subsubsection{Quantile optimization}
If a certain noise level must be guaranteed, yet memory and speed limitations rule out the previous methods,
then we can make $\brx=\sum_{\ell=1}^L\EE[\rho_{x_\ell}]/\alpha$.
This \gls{papr} estimate controls the probability of
a \gls{tx} symbol to exceed the noise threshold via Markov's inequality, i.e.:
\begin{equation*}
    \Pr(\sgQ^2\geq\sigma^2_\star)\leq\alpha\,.
\end{equation*}

\section{Numerical results}
\label{sec:results}

\begin{table}[!t]
\caption{Simulation parameters}%
\begin{tabular}{lrlc}
\textbf{NAME}& \multicolumn{2}{c}{\textbf{VALUE}}          & \textbf{SYMBOL}          \\
\multicolumn{4}{c}{A. \textbf{Signal parameters}}                                                     \\
Carrier frequency         & 28                   & \si{\giga\hertz}                  & $\fc$           \\
Wavelength                & 10.7                 & \si{\milli\meter}                  & $\uplambda$   \\
Waveform       & (cyclic-prefix)                 & \acs{ofdm}                &                       \\
5G NR numerology          & 3                    & \cite{dahlman20185g}                     & $\upmu$\\
Subcarrier spacing        & 120                  & \si{\kilo\hertz}                 & \acs{scs}                      \\
Symbol duration           & 9.44                 & \si{\micro\second}                   & $T$                      \\
Max. channel delay          & 1.11                 & \si{\micro\second}                   & $T_\mathrm{c}$\\
\multicolumn{4}{c}{\textbf{B. Antenna parameters}}                                                    \\
 \acs{tx} antenna elements               & 8                    &                      & $N$                      \\
\acs{rx} antenna elements               & 8                    &                      & $M$\\
\acs{tx} codebook size          & 27                    &                      & $L'$                     \\
\acs{rx} codebook size          & 27                    &                      & $K'$\\
\acs{tx}  \acs{rf} chains              &  1 or 4                    &                      & $L$\\
\acs{rx}  \acs{rf} chains              & 1 or 4                   &                      & $K$\\
Ant. element spacing           & 5.35                 & \si{\milli\meter}                   & $\uplambda/2$ \\
Ant. element pattern           &  TR 38.901                & \cite{tr38901}                  &  \\
Elevation angle                & 0    & \si{\degree}                     &                          \\
Sector coverage                & 120    & \si{\degree}                     &                          \\
\multicolumn{4}{c}{\textbf{C. \acs{si} parameters}}                                                    \\
Min.  \acs{tx}- \acs{rx} separation     & 5.35                 & \si{\milli\meter}                   & $\uplambda/2$ \\
Wall distance             & 4                    & \si{\meter}                    &                          \\
Wall azimuth              & 65                   & \si{\degree}                    &                          \\
Multipath \acs{maxsi}             & -54.5                & \si{\deci\bel}                      & $\mS(\B,\A)$\\
Single-path  \acs{maxsi}      & -60.6                & \si{\deci\bel}                   & $\mSo(\B,\A)$\\
\multicolumn{4}{c}{\textbf{D. Sensing parameters}}                                                    \\
 \acs{tx} power         & 30                   & \si{\deci\belmilliwatt}                  & $\Pt$                    \\
Sampling interval         & 0.51                 & \si{\nano\second}                   & $T_\mathrm{S}$\\
Number of  symbols         & 1000                   &                      &                          \\
Number of subcarriers     & 1680                 &                      &                          \\
Bandwidth                 & 200                  & \si{\mega\hertz}                 &      $B$                    \\
Modulation                & 64                   & \acs{qam}                  &                          \\
Target azimuth            & -39                  & \si{\degree}                      &                          \\
Target distance           & 40                   & \si{\meter}                    &                          \\
Target radar cross section & 1                    & \si{\meter\squared} &                          \\
Target  \acs{rx} power           & -81.0                & \si{\deci\belmilliwatt}                    &                          \\
Thermal noise             & -90.8                & \si{\deci\belmilliwatt}                    & $\sigma_{\mathrm{T}}^2 $ \\
 \acs{adc} bits& 6                    & bit                  & $Q$\\
\multicolumn{4}{c}{\textbf{E. Communication   parameters}}                                            \\
\acs{ue} antennas               & 1                    &                      &                          \\
Number of subcarriers     & 420                  &                      &                          \\
Number of symbols         & 14                   &   (2 pilots)    &                          \\
Bandwidth                 & 50                   & \si{\mega\hertz}                   &                          \\
Modulation                & 4                    & \acs{qam}                  &                          \\
Code rate                  & 0.7                  &                      &                          \\
Bitrate                   & 62.3                & \si{\mega\bit\per\second}              &                          \\
Channel model                 & \acs{los} outdoor \acs{umi}                    & \cite{tr38901}                     &                        \end{tabular}%
\label{tab:sysparams}%
\end{table}

We corroborate now the presented theoretical analysis
of \gls{cissir} with a series of simulation experiments.
First, we present some sensing results in \cref{sec:sens},
which illustrate and validate the performance bounds
derived in \cref{sec:adc}. Afterwards,
we study the impact of \gls{si} reduction on
communication, in \cref{sec:comm}.
We continue with an
investigation of the communication-sensing trade-off
in \cref{sec:tradeoff}, both for \gls{cissir}
and for the baseline LoneSTAR~\cite{roberts2023lonestar}.
Finally, we complete our numerical evaluation with
a discussion on the time complexity and feasibility
of the studied algorithms.

Our simulation environment has been developed from previous work in~\cite{dehghani2024analog} and is based on Sionna\texttrademark, a Python library by Nvidia that integrates link-level and
ray-tracing simulation tools~\cite{sionna}.
Our reference standard is 5G \gls{nr} \gls{fr2}.
As such, we employ
\gls{ofdm} waveforms in the \gls{mmwave} frequency with \gls{qam}, and \gls{scs} according to a given numerology $\upmu$~\cite{dahlman20185g}. Likewise,
our reference beam codebooks
$\A$ and $\B$
are given by the columns
of a \gls{dft} matrix with
an oversampling factor
$O_1=4$~\cite{ts38214}. We assume \glspl{ula} with $M=N=8$ antennas,
radiation pattern according
to TR 38.901~\cite{tr38901},
fixed elevation, and $L'=K'=27$ beams,
corresponding to a full grid
over a sector coverage of \SI{120}{\degree}.
\cref{tab:sysparams} contains the complete parameter list.

We simulate the \gls{si} channel via ray-tracing, with $D=2$ paths corresponding
to $\S_1(t)$ and $\S_2(t)$
(see \cref{fig:sysmodel}). The component $\S_1(t)$ is produced by a spherical wave model,
similar to the flat \gls{mimo} \gls{si} channel in \cite{roberts2023lonestar,bayraktar2023selfinterference,bayraktar2024hybrid}, while we add the clutter
$\S_2(t)$ as a wall-reflected path. Leveraging \cref{thm:discrete}, we approximate
the continuous channels by their discrete versions sampled at $T_\mathrm{S}\ll 1/B$, and we apply a passband filter to account for the sensing
bandwidth $B$.

We optimize the codebooks via \crefrange{alg:lagrange}{alg:sdr}
for tapered beamforming and phased arrays, respectively.
We set $\beta=1$ (and thus $\epsilon=\varepsilon$), which yields the best \gls{rx} echo power
in our setup, as discussed in \cref{sec:split}.
Unless otherwise noted, we use phased arrays as obtained by \cref{alg:sdr}.
The convex \gls{sdp} problem \eqrefrange{eq:sdp-obj}{eq:psd}
therein is solved with CVXPY and its splitting conic solver~\cite{diamond2016cvxpy}, which we also use
to optimize LoneSTAR.

\subsection{Sensing performance}
\label{sec:sens}

\begin{figure}[!t]
     \centering
     \input{figs/range_profile.pgf}%
    \caption{\gls{ofdm} range profiles with reference and
    \gls{si}-optimized
    \gls{tx}/\gls{rx} analog
    beamformers pointing towards
    a target at \SI{-39}{\degree}.}%
    \label{fig:range-profile}%
\end{figure}

Our primary goal, as detailed in
\crefrange{sec:sys-opt}{sec:solution}, is to
design
\gls{isac} beam codebooks
for which
the \gls{si}-induced
quantization noise
at our \gls{rx} sniffer
is reduced. To this aim,
we have derived the
performance bound \eqref{eq:snr}
for the sensing \gls{snr}
in \cref{sec:adc}, which
we compare now
with experimental results.

For this, we simulate
a backscatter channel $\H(t)=\S(t)+\Hr(t)$
in a single-target scenario,
and we apply the channel to an
\gls{ofdm} signal $x(t)$ (cf. \cref{tab:sysparams}.D)
to obtain both $y(t)$, as per \eqref{eq:analog},
and its noisy version $\yh(t)$,
according to the quantization scheme in \cref{sec:bound}.
From $\yh(t)$,
we compute a range profile via
matched filtering with $x(t)$,
as we display in \cref{fig:range-profile}
for two distinct beam pairs pointing towards the target.
More specifically,
one range profile is obtained
for a reference beam pair
($\bo$, $\ao$) under a \gls{maxsi} around \SI{-55}{\deci\bel},
such that the target reflection lies
just below the system's dynamic range.
In contrast to this, we estimate a second
range profile with optimized beams
($\co$, $\wo$) for a target \gls{si}
$\varepsilon$ near \SI{-85}{\deci\bel},
where we observe that the reduced noise floor reveals
the target.

This observation
encourages us to compare
the theoretical \gls{snr}
bound \eqref{eq:snr} with
the simulated \gls{snr}
from $\yh(t)$ and the associated square-root \gls{crlb},
for different values of $\varepsilon$.
The comparison is shown
in \cref{fig:snr},
where we replace the \gls{maxsi} $\mS(\C,\W)$
in \eqref{eq:snr} with $\varepsilon$,
we set $\norm[2]{\hr}=\norm[2]{\bo\Hr\ao}$,
we compute the \gls{papr} $\rho_s\approx$~\SI{9.6}{\deci\bel}
as the average over symbols (cf. \cref{sec:avg-opt}),
and we make $\gamma=1$ and $\zeta\approx0.98$.
We remind the reader that this bound
is an a priori approximation of \eqref{eq:snr}
that is only valid
for the average noise over
\gls{tx} symbols, and
for high \gls{si} such that
$\norm[2]{\hr}=\norm[2]{\co^\h\Hr\wo}\approx\norm[2]{\bo^\h\Hr\ao}$.

In \cref{fig:snr}, the \gls{si}
constraint \eqref{eq:si-split} only
becomes active for ($\co,\wo$) when the target \gls{si}
$\varepsilon$ is below \SI{-68}{\deci\bel}.
In that regime, we observe a gap of at most
\SI{2.5}{\deci\bel}
between the sensing \gls{snr}, when the noise is symbol-averaged, 
and the bound in \eqref{eq:snr}. This bound gap, which arises from Young's inequality
(cf. \cref{apx:bound}), compensates to some extent
the \gls{snr} fluctuation caused by the \gls{papr}
variance over symbols.
As expected, this theoretical bound becomes invalid for very low
\gls{si}, due to a gain loss induced by high codebook deviations
$\stx^2=\fro{\W-\A}^2/\fro{\A}^2$ and $\srx^2=\fro{\C-\B}^2/\fro{\B}^2$.

\begin{figure}[!t]
     \centering
     \input{figs/bound_sqnr.pgf}%
    \caption{Simulated sensing
    \gls{snr} and \gls{crlb} (symbol-averaged noise and \SI{99}{\percent} interval), with approximate bound for high \gls{si}, in function of the target \gls{si}
    $\varepsilon$.}%
    \label{fig:snr}%
\end{figure}

\begin{figure}[!b]
     \centering
    \input{figs/beams.pgf}%
    \vspace{-2mm}%
    \caption{Orthogonal beams from the 
    \gls{tx} reference $\A$ ($\varepsilon=$\SI{-54.5}{\deci\bel})
    and from \gls{cissir} $\W$ ($\varepsilon=$\SI{-100.5}{\deci\bel}),
    with color-matched nominal beam direction.}%
    \label{fig:beams}%
\end{figure}

To illustrate codebook deviation,
we compare
some reference \gls{tx} orthogonal beams from $\A$ with those from $\W$ in \cref{fig:beams}.
Here, we have chosen very low $\varepsilon$,
which leads to
higher secondary lobes and gain loss in the main lobe,
in the best case.
In the worst case,
beam misalignment occurs,
as can be seen near the sector edges at \SI{\pm45}{\degree}.
As a result, the angular coverage of \gls{cissir}
may become as low as
\SI{90}{\degree}, in contrast to
the reference \SI{120}{\degree} coverage,
as we decrease
the \gls{maxsi}.

Motivated by this phenomenon, we investigate next the effect of
\gls{si} reduction and
codebook deviation on communication.

\subsection{Communication performance}
\label{sec:comm}

We evaluate communication performance
via link-level simulations from
a \gls{bs} towards single-antenna \glspl{ue},
placed randomly on an \gls{umi} sector under
\gls{los} outdoor conditions~\cite{tr38901} and 2 beamforming scenarios.
In particular, we either assume a single
user served with analog beamforming ($L=1$),
or hybrid \gls{mumimo} for 2 users and
$L=4$ beams.
The \gls{bs} selects the $L$ strongest
orthogonal beams from the \gls{tx} codebook,
according to \cite[Algorithm 1]{fu2023tutorial},
and feeds them with the
\ac{zf} precoded data streams.
In turn, the \glspl{ue} perform
\gls{ls} channel estimation from 2 \gls{ofdm} pilot symbols, and
\gls{lmmse} equalization before decoding.
The data in the remaining 12 \gls{tx} symbols is
\gls{ldpc}-encoded
and constitutes a single block
(see \cref{tab:sysparams}.E for more details).

\begin{figure}[!t]
    \centering
    \input{figs/bler.pgf}%
    \caption{\acs{bler} for different analog and hybrid \gls{cissir} 
    codebooks. The solid lines correspond to the \gls{tx} reference
    codebook.}%
    \label{fig:bler}%
\end{figure}

We simulate the \gls{bler} as a function
of the energy per bit to noise power spectral density ($E_b/N_0$)
for the
5G-\gls{nr} \gls{tx} reference codebook, as well as for three \gls{cissir}
\gls{tx}
codebooks with different levels of
target \gls{si} $\varepsilon$ and
codebook deviation $\stx^2$.
The resulting \gls{bler} curves in~\cref{fig:bler}
show a progressive degradation of the communication performance
as we reduce $\varepsilon$,
and conversely increase $\stx^2$ (cf. \cref{sec:split}).
In the analog scenario, this degradation seems to have
a small impact on communication for
small values of $\stx^2$, which allows us to add
\SI{16}{\deci\bel} of extra \gls{si} attenuation
while losing at most \SI{1}{\deci\bel}
of \gls{tx} power for communication.
For the optimized range profile in \cref{fig:range-profile},
on the other hand, the analog \gls{bler} curve in \cref{fig:bler}
lies within \SI{3}{\deci\bel} apart from the reference codebook.
As we further decrease the target \gls{si}
$\varepsilon$, however, this distance from the reference
becomes as large as \SI{6}{\deci\bel}, as a result of the greater codebook deviation.

We can also observe
communication degradation
for the \gls{mumimo} results
in \cref{fig:bler} as
the \gls{tx} deviation increases.
This deterioration is, nevertheless,
negligible for $\stx^2$ below
\SI{-11}{\deci\bel}, and the performance
gap for $\stx^2\approx$\SI{-8}{\deci\bel}
is smaller than in the analog case.
Hybrid \gls{mumimo} appears thus to be less
sensitive to the chosen beam codebook,
which might be caused by the
inherent constraints of hybrid beamforming,
such as rank-reduced precoding~\cite{fu2023tutorial}.

From these experiments, we conclude that
extreme \gls{si} reduction may degrade
conventional communication,
as $\stx^2$ increases and gives rise to
beam misalignment and poorer angular coverage.
That being said, communication degradation
seems to unfold gently as the target \gls{si} is reduced,
hinting at a benign operating point
where sensing-enhanced, yet communication-transparent codebooks
can be realized.

\begin{figure*}[!t]
     \centering
     \subfloat[Single-path \gls{si} channel $\S_1(t)$.]{
         \centering
         \input{figs/1tap_si.pgf}%
         \label{fig:1tap-opt}%
     }%
     \subfloat[Multipath \gls{si} channel $\S(t)$.]{
         \centering
         \input{figs/full_si.pgf}%
         \label{fig:full-opt}%
     }%
    \caption{\gls{tx} and \gls{rx} codebook deviations
    $\stx^2$ and $\srx^2$
    as functions of the
    \gls{maxsi} for LoneSTAR and \gls{cissir}, both for phased and tapered arrays.}%
    \label{fig:tradeoff}%
\end{figure*}

\subsection{Trade-off between self-interference
and codebook deviation}
\label{sec:tradeoff}

From the analysis in
\crefrange{sec:sens}{sec:comm},
we consider \gls{maxsi} and codebook deviations
good indicators for sensing and
communication performance, respectively.
Thus, we use them now to study the
sensing-communication
trade-off of \gls{cissir} and compare it
to that of our baseline LoneSTAR~\cite{roberts2023lonestar}.

LoneSTAR codebook optimization, described in
\cite[Algorithm 1]{roberts2023lonestar}, 
assumes a frequency-flat \gls{mimo}
\gls{si}  channel that can be modeled
as a single matrix $\bar{\H}\in\CC^{M\times N}$.
Given $\bar{\H}$, it minimizes the following \gls{si} quantity:
\begin{equation}\label{eq:lonestar}
    \fro{\C\bar{\H}\W}^2
    =\sum_{k=1}^{K'}\sum_{\ell=1}^{L'}
    \norm{\c_k\bar{\H}\w_\ell}^2\,,
\end{equation}
under per-antenna power constraints
and a maximum
codebook deviation $\scb^2\geq\max\lbrace\stx^2,\srx^2\rbrace$.
The authors follow an alternating approach
so that the \gls{tx}
codebook $\W$ is optimized first
and then plugged in \eqref{eq:lonestar}
for the optimization of the \gls{rx} codebook
$\C$.
In addition to this,
the codebooks are projected to their
feasibility sets, $\setW\ni\W$ and
$\setC\ni\C$, after the corresponding
optimization step.

For a fair comparison,
we first consider a simplified flat-fading
\gls{si} model that only includes the direct coupling $\S_1(t)$,
and we set $\bar{\H}=\S_1(\tau_1)$ in \eqref{eq:lonestar}
for LoneSTAR optimization, with $\tau_1$ the mean path delay.
For \gls{cissir}, we simply set $\S(t)=\S_1(\tau_1)\delta(t-\tau_1)$
in \crefrange{alg:lagrange}{alg:sdr}.

We optimize LoneSTAR and \gls{cissir} for different values
of $\scb^2$ and target \gls{si} $\varepsilon$,
respectively, and plot the resulting
trade-off curves in
\cref{fig:1tap-opt},
assuming either 
phased
or tapered arrays.
Here, we observe that \gls{cissir} can reduce \gls{si}
further than LoneSTAR for the same $\stx^2$ and $\srx^2$,
with a \gls{maxsi} gap going up to \SI{14}{\deci\bel}
for a codebook deviation around \SI{-18}{dB}.
We suppose this performance gap lies on
LoneSTAR's beam-exhaustive
\gls{si} sum-power minimization, via \eqref{eq:lonestar},
which is sensible for full-duplex communication
but overly conservative for \gls{isac}, as per \cref{thm:quant-bound}.
An additional trade-off analysis using \eqref{eq:lonestar}
instead of $\mSo(\C,\W)$ seems to confirm this supposition,
as it displays an \gls{si} sum-power up to \SI{5}{\deci\bel} lower for LoneSTAR
than for \gls{cissir}.

In \cref{fig:1tap-opt},
\gls{cissir} reaches around \SI{2}{\deci\bel} more
\gls{si}-reduction with tapered ($\Wd, \Cd$)
than with phased arrays ($\Wp, \Cp$) for the same deviation,
which is expected since $\Wp\subset\Wd$ and $\Cp\subset\Cd$.
For LoneSTAR, tapering allows for more \gls{si} reduction,
but it performs similarly
to phased arrays for \gls{maxsi} over \SI{-85}{\deci\bel}, perhaps
due to the considered per-antenna power constraint~\cite{roberts2023lonestar}.
\gls{cissir} curves coincide
for \gls{tx} and \gls{rx},
which is expected for $\beta=1$.
In fact, a different choice of $\beta>0$ would
simply alter the parameter
$\epsilon\in\lbrace\varepsilon\beta,\varepsilon\beta^{-1}\rbrace$
in \cref{prob:split},
causing a mere horizontal curve shift along the \gls{maxsi} axis.

\begin{figure*}[!b]
     \centering
     \subfloat[LoneSTAR, $\mSo(\C,\W)=$\SI{-69.9}{\deci\bel}.]{
         \centering
         \includegraphics{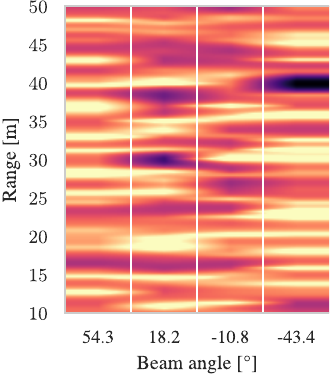}%
         \label{fig:ra-lst}%
     }\hfill
     \subfloat[\acs{cissir}, $\mSo(\C,\W)=$\SI{-80.6}{\deci\bel}.]{
         \centering
         \includegraphics{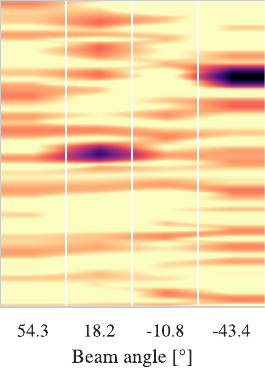}%
         \label{fig:ra-csr}%
     }\hfill
     \subfloat[Ground truth ($\A,\B$ without \gls{si} or quantization).]{
         \centering
         \includegraphics{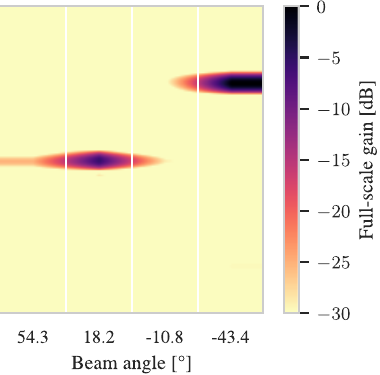}%
         \label{fig:ground_truth}%
     }%
    \caption{Range-angle maps for $L=K=4$ beams with different codebooks and digital residual \gls{si} cancellation. \gls{tx} deviation $\stx^2\approx$\SI{-18.2}{\deci\bel} in
    \crefrange{fig:ra-lst}{fig:ra-csr}.}%
    \label{fig:rng-ang}%
\end{figure*}

We plot the trade-off curves again
for our full multipath \gls{si} model $\S(t)$
in~\cref{fig:full-opt}.
Since LoneSTAR only considers
flat-frequency \gls{si},
we retain the
$\S_1(t)$-optimized
LoneSTAR codebooks above to
evaluate their performance under
multipath \gls{si}.
Unsurprisingly,
LoneSTAR experiences a performance degradation in this multipath
scenario, as it is unable to
attain a \gls{maxsi}
below \SI{-83}{\deci\bel}.
On the other hand,
\gls{cissir} can still
reduce the \gls{maxsi} well below \SI{-100}{\deci\bel}
and \SI{-120}{\deci\bel} for phased and tapered
arrays, respectively.
While the authors in \cite{roberts2023lonestar}
do not directly address multipath channels, we should note that
LoneSTAR could be extended with time-domain filtering or covariance-based modeling to capture such effects.

Overall, \gls{cissir} performs better than LoneSTAR in terms of the \gls{maxsi}.
In order to demonstrate once again this metric's relevance for \gls{isac}
under hybrid beamforming,
we compute range-angle maps with $L=K=4$ orthogonal beams in \cref{fig:rng-ang},
and we compare LoneSTAR and \gls{cissir} codebooks for
$\stx^2\approx$\SI{-18.2}{\deci\bel} and a single-path \gls{si} channel.
Here, we add a second target at \SI{30}{\meter} and \SI{20}{\degree}, and 
we cancel the residual \gls{si} digitally~\cite{alexandropoulos2022fullduplex} after \gls{adc} sampling.
Despite this digital \gls{si} cancellation,
we observe a higher noise floor for LoneSTAR in \cref{fig:ra-lst} than
for \gls{cissir} in \cref{fig:ra-csr}, as a result of the greater
 $\mSo(\C,\W)$ and the \gls{adc}'s limited dynamic range.

 \begin{table*}[!t]
\centering
\caption{Algorithmic metrics for LoneSTAR (phased and tapered) and \gls{cissir}}
\begin{tabular}{l|ccc|cc|cc}\label{tab:algo}
\textbf{\gls{si} channel model }         & \multicolumn{3}{c|}{\textbf{Optimization time (1st -- 3rd quartiles)}} & \multicolumn{2}{c|}{\textbf{Minimum feasible \gls{si}}}                    & \multicolumn{2}{c}{\textbf{Largest \gls{si} gap}}               \\ \hline
                        & LoneSTAR& Phased \gls{cissir}& Tapered \gls{cissir}& Phased \gls{cissir}& Tapered \gls{cissir}& Phased \gls{cissir}& Tapered \gls{cissir}\\
Single-path $\S_1(t)$ & \SIrange{0.16}{1.02}{\second}           & \SIrange{1.20}{1.48}{\second} & \SIrange{6.5}{7.7}{\milli\second} & \SI{-123}{\deci\bel} & \SI{-165.4}{\deci\bel}              & \SI{0.0}{\deci\bel}  & \SI{0.5}{\deci\bel}\\
Multipath $\S(t)$    & -                            & \SIrange{1.14}{1.39}{\second} & \SIrange{4.6}{8.2}{\milli\second} & \SI{-108}{\deci\bel} & \SI{-132.1}{\deci\bel} & \SI{0.4}{\deci\bel} & \SI{1.9}{\deci\bel}\end{tabular}
\end{table*}

\subsection{Algorithmic performance}
\label{sec:algo}

We finalize our investigation with some empirical results
on the time complexity and feasibility of \gls{cissir},
which we summarize in \cref{tab:algo}.

To analyze time complexity, we
report the first and third quartiles of the studied algorithms'
runtime on an AMD Epyc 73F3 processor.
Here, we can see that phased \gls{cissir}
is slower than LoneSTAR,
which can be explained by
the space lifting of the \gls{sdp} problem;
e.g.,
\cref{alg:sdr} must optimize
$L'$ $N\times N$ matrices
to retrieve the $N\times L'$-sized \gls{tx} codebook.
While slower than \SI{1}{\second},
the runtime of
\cref{alg:sdr} should represent
a negligible overhead
in a practical calibration scheme,
since the coherence time of
\gls{si} channels has been reported to be
in the order of minutes,
at least~\cite{roberts2023lonestar,roberts2022beamformed}.
On the other hand, 
tapered \gls{cissir} is almost 200 times faster than its phased counterpart
and up to 130 times faster than LoneSTAR,
thanks to the semi-closed-form solution of \cref{alg:lagrange}.

Regarding problem feasibility, we can
compute the minimum feasible \gls{si} exactly
for tapered \gls{cissir}
as $\min_{k,\ell}(\bar{\sigma}(\b_k,\Gr)\bar{\sigma}(\a_\ell,\Gt))$
(cf. \cref{pr:taper-sol}).
For phased arrays,
we estimate the minimum feasible \gls{si}
as the target \gls{si}
$\varepsilon$ for which
$\Upupsilon(\tilde{\Z})$ in \eqref{eq:rank1metric}
is above 0.995 on average.
In any case,
we conclude that \gls{cissir}
provides optimal solutions
in our scenarios for a wide range of
$\varepsilon$, over
\SI{50}{\deci\bel} below the reference \gls{maxsi}
in~\cref{tab:sysparams}.

Finally, we note that the original
\gls{si} constraint
\eqref{eq:eps-const}
is fulfilled quite tightly
in our simulations.
In fact, the ``\gls{si} gap'' between \gls{maxsi}
$\mS(\C,\W)$ and target \gls{si} $\varepsilon$ is at most
\SI{0.4}{\deci\bel} for \gls{maxsi} above \SI{-108}{\deci\bel} in our
experiments, and it only increases to \SI{2}{\deci\bel}
for multipath tapered \gls{cissir} when it approaches its feasible \gls{si}.
For single-path \gls{si}, the observed gap is at most
\SI{0.5}{\deci\bel}, which seems to confirm
the expected tightness of \eqref{eq:split-ineq} for
spatially sparse symmetric \gls{si} channels.
Furthermore, we have noticed similar gaps
for the saturation bound \eqref{eq:sat}
in the context of \cref{pr:sat}.
In conclusion, it can be argued that
\gls{cissir} is a valid approximation to the original
\cref{prob:joint},  based on the obtained \gls{si} gap values
from \cref{tab:algo}.
\section{Conclusion}
\label{sec:conc}

In this paper, we have proposed
\acf{cissir} as an evolution of the existing beam codebooks
in current mobile networks, both for hybrid and analog schemes.
\gls{cissir} achieves notable \acf{si} reduction and thus paves the way for \acf{isac} beyond 5G.

The proposed methodology allows a systematic optimization of
phased-array and tapered beam codebooks via
\acl{sdp} and spectral decomposition, respectively.
Our framework also provides performance guarantees
regarding multipath \gls{si} impairments in the analog domain.
Most notably, we have derived an analytical bound on the quantization noise
at the \acl{adc}, and we have proved its practicality via sensing simulations.
Moreover, our link-level results show how \gls{cissir} can reduce
\gls{si} with a small impact on 5G-\acs{nr}-based communication.
Contrary to other available \gls{si}-reduced beam codebooks,
our proposed method has easily interpretable optimization parameters,
and it outperforms said codebooks for relevant
\gls{isac} performance indicators.

Future research can introduce other practical aspects into our optimization framework,
such as subarrays, \gls{si} channel estimation errors, or quantized beamforming.
In doing so, we aim towards the hardware realization of a high dynamic-range
5G-\acs{nr} \gls{isac} proof of concept.

\appendix
\section{}
\crefalias{subsection}{appendix}

\subsection{Proof of
\texorpdfstring{\cref{prop:paralel}}{Proposition 1}
(\texorpdfstring{\acs{maxsi}}{max. SI} gap)}
\label{apx:split}

To prove $\mS(\C,\W)\leq\mgg(\C,\W)$
in \cref{prop:paralel},
we use the \acs{svd} in \cref{def:svd}
to bound the
argument inside $\mS(\C,\W)\coloneqq\max_{ k\in[K'],\ell\in[L']}\int_0^{\Tc}\norm{\c_k^\h\S(t)\w_\ell}dt$
as:
\begin{align*}
    &\int_0^{\Tc}\norm{\c_k^\h\S(t)\w_\ell}dt=
    \int_0^{\Tc}\norm{\c_k^\h
   \U_t\Eig_t^{1/2}\Eig_t^{1/2}\V^\h_t\w_\ell}dt\\
   \leq&
    \int_0^{\Tc}\Norm[2]{
    \Eig_t^{1/2}\U^\h_t\c_k}\Norm[2]{\Eig_t^{1/2}\V^\h_t\w_\ell}dt\\
    \leq&\sqrt{\int_0^{\Tc}\Norm[2]{\Eig_t^{1/2}\U^\h_t\c_k}^2\,dt
    \int_0^{\Tc}\Norm[2]{\Eig_t^{1/2}\V^\h_t\w_\ell}^2dt}\\
    =&\sqrt{\int_0^{\Tc}\c_k^\h\U_t\Eig_t\U_t^\h\c_k\,dt
    \int_0^{\Tc}\w_\ell^\h
    \V_t\Eig_t\V_t^\h\w_\ell\,dt}\,,
\end{align*}
where we have applied the Cauchy-Schwarz inequality twice,
first in its vector and then in its integral form.
The proof is completed by applying $\max$ over $k\in[K']$ and $\ell\in[L']$.

\subsection{Proof of \texorpdfstring{\cref{pr:taper-sol}}{Proposition 2}
(tapered beamforming solution)}
\label{apx:taper}

We proof the optimality of $\zsol\in\CC^P$ in
\cref{pr:taper-sol} by constructing
$\zsol$ such that it fulfills, together with
certain dual variables $\mu,\lambda>0$, the
\gls{kkt} conditions for the following
convex relaxation of \cref{prob:taper}
\begin{equation}\label[problem]{prob:tap-cvx}
    \underset{\z\in\CC^{P}}{\maximize}{\;\Re(\r^\h\z)}
    \quad\quad
    \st\quad\quad
    \z^\h\G\z\leq\epsilon,\quad\z^\h\z\leq 1\,.
\end{equation}
We note that \cref{prob:tap-cvx} fulfills Slater's condition,
as $\Zero$ is an interior point of the problem's feasible region.
Consequently, strong duality holds and thus any point
satisfying the \gls{kkt} conditions is optimal.

First, we recall the \gls{kkt} stationarity condition as
$\nabla\setL(\zsol,\mu,\lambda)=\Zero$
for the Lagrangian of \eqref{prob:tap-cvx}:
\begin{equation*}
   \setL(\z,\mu, \lambda)={-\Re(\r^\h\z)+
    \mu\left(\z^\h\G\z-\epsilon\right)+
    \lambda\left(\z^\h\z-1\right)},
\end{equation*}
from which it follows that\footnote{Formally, \eqref{eq:statty}
follows from the stationarity condition for a reformulation of
\cref{prob:tap-cvx} with real optimization variables, cf. \cite[Equation (2)]{fuchs2014application}.}
\begin{align}
    \nonumber
    -\r+2\mu\G\zsol+2\lambda\zsol=\Zero\;&\Rightarrow\;
    \zsol=\frac{1}{2}\left(\mu\G+\lambda\I\right)^{-1}\r=\\
    \eta\left(\Q\Gam\Q^\h+\nu\I\right)^{-1}\r&=\eta\Q\underbrace{\left(\Gam+\nu\I\right)^{-1}}_{\eqqcolon\D_\nu}\Q^\h\r
    \,,\label{eq:statty}
\end{align}
where we have applied the spectral decomposition of $\G=\Q\Gam\Q^\h$ and the variable changes $\eta=1/2\mu$
and $\nu=\lambda/\mu$ for $\mu,\lambda,\eta,\nu>0$.
Notice that $\zsol$ is obtained
by multiplying $\r$ with the positive definite matrix $\Q\D_\nu\Q^\h$, and scaling by $\eta$
to ensure $\Norm[2]{\zsol}=1$.

Since we are searching for strictly positive
dual variables,
$\zsol$ should satisfy all constraints in \eqref{prob:taper} and
\eqref{prob:tap-cvx} with equality
to fulfill complementary slackness, i.e.:
\begin{align}
    \nonumber
    1=\zsol^\h\zsol&=\eta\r^\h\Q\D_\nu\Q^\h\Q\D_\nu\Q^\h\r \eta=
    \eta^2\r^\h\Q\D_\nu^2\Q^\h\r\,,\\\nonumber\epsilon=\zsol^\h\G\zsol&=\eta\r^\h\Q\D_\nu\Q^\h\Q\Gam\Q^\h\Q\D_\nu\Q^\h\r \eta\\
    &=
    \eta^2\r^\h\Q\D_\nu\Gam\D_\nu\Q^\h\r\,,\label{eq:primal-feas}
\end{align}
and if we divide both expressions by $\eta^2>0$, we obtain:
\begin{align}\nonumber
    \r^\h\Q\D_\nu\Gam\D_\nu\Q^\h\r=&\frac{\epsilon}{\eta^2}=
    \epsilon\,\r^\h\Q\D_\nu^2\Q^\h\r\\
    \sum_{p=1}^P\frac{\sigma_p\norm{\q_p^\h\r}^2}{\left(\sigma_p+\nu\right)^2}&=
    \epsilon\sum_{p=1}^P\frac{\norm{\q_p^\h\r}^2}{\left(\sigma_p+\nu\right)^2}\,,
    \label{eq:lagrange-const-equal}
\end{align}
where we have used the diagonality of $\D_\nu$ and $\Gam$
to write out the matrix multiplications. We can now multiply both sides
of \eqref{eq:lagrange-const-equal}
by $\prod_{p=1}^P{\left(\sigma_p+\nu\right)^2}>0$,
and rearrange to obtain $\setP(\nu)=0$.
Hence, if $\setP(\nu)$ has a positive root $\nu_\star$,
the \gls{kkt} conditions of \eqref{prob:tap-cvx}
are fulfilled for $\zsol$, as given in \cref{pr:taper-sol}, with
\begin{equation*}
    \mu=\frac{1}{2\eta}=\frac{1}{2}\sqrt{\r^\h\Q\D_{\nu_\star}^2\Q^\h\r}>0\,,
    \quad\text{and}\quad
    \lambda=\nu_\star\mu>0\,.
\end{equation*}
More specifically,
($\mu,\lambda$) are dual feasible,
stationarity is fulfilled by \eqref{eq:statty},
and \eqref{eq:primal-feas} ensures both
complementary slackness and primal feasibility.
Moreover, $\zsol$ is also feasible for \eqref{prob:taper} by \eqref{eq:primal-feas},
and thus \eqref{prob:tap-cvx} becomes a tight convex relaxation of \eqref{prob:taper}.
The following \cref{thm:poly} completes the proof by guaranteeing the
existence of $\nu_\star>0$ under the conditions of \cref{pr:taper-sol}.

\begin{lemma}\label{thm:poly}
    Consider $\G=\Q\Gam\Q^\h\in\SS^P_+$,
    $\lbrace\sigma_p\rbrace_{p=1}^P$,
    $\r\in\setZ$, and $\epsilon>0$
    as defined in \cref{pr:taper-sol}. Any of the conditions \ref{cond:rank-def})
    and \ref{cond:rank-full}) therein are sufficient for the existence of a
    positive root for
\begin{equation*}
    \setP(\nu)\coloneqq\sum_{p=1}^P{\left(\sigma_p-\epsilon\right)\norm{\q_p^\h\r}^2}\prod_{q\neq p}{\left(\sigma_q+\nu\right)^2}\,.
\end{equation*}
\end{lemma}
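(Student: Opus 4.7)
The plan is to reduce the existence of a positive root of $\setP(\nu)$ to an intermediate-value argument on a rational function with transparent limit behavior. Specifically, for any $\nu > 0$, dividing $\setP(\nu)$ by $\prod_{q=1}^P (\sigma_q + \nu)^2 > 0$ yields the equivalent condition
\begin{equation*}
g(\nu) \coloneqq \frac{\sum_{p=1}^P \sigma_p\, w_p(\nu)}{\sum_{p=1}^P w_p(\nu)} = \epsilon, \qquad w_p(\nu) \coloneqq \frac{\norm{\q_p^\h\r}^2}{(\sigma_p + \nu)^2},
\end{equation*}
which is exactly the relation \eqref{eq:lagrange-const-equal} already derived in the proof of \cref{pr:taper-sol}. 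Since all $w_p(\nu)$ are continuous and strictly positive on $(0,\infty)$ (the denominator being a nonzero sum because $\r \neq \mathbf{0}$ makes at least one $\norm{\q_p^\h\r}^2$ positive), $g$ is continuous on $(0,\infty)$ and can be interpreted as a weighted average of the eigenvalues $\sigma_p$.

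The next step is to compute the boundary limits of $g$. Using $\sum_p \norm{\q_p^\h\r}^2 = \Norm[2]{\r}^2 = 1$ (since $\Q$ is unitary and $\r \in \setZ$) together with $\sum_p \sigma_p \norm{\q_p^\h\r}^2 = \r^\h\G\r$, one gets
\begin{equation*}
\lim_{\nu \to \infty} g(\nu) \;=\; \r^\h\G\r,
\end{equation*}
by noting that every $(\sigma_p+\nu)^{-2}$ becomes asymptotically $\nu^{-2}$ and the common factor cancels. For the left endpoint I would distinguish the two cases of the lemma. In case \ref{cond:rank-full} all $\sigma_p > 0$, so direct substitution $\nu = 0$ in $w_p$ is valid and gives $\lim_{\nu \to 0^+} g(\nu) = \bar\sigma(\r,\G)$. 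In case \ref{cond:rank-def}, pick an index $p_0$ with $\sigma_{p_0} = 0$; then $w_{p_0}(\nu) \to \infty$ while $\sigma_{p_0} w_{p_0}(\nu) \equiv 0$, so the denominator of $g$ explodes while the numerator stays bounded, giving $\lim_{\nu \to 0^+} g(\nu) = 0$.

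The intermediate value theorem then delivers the result: whenever $\epsilon$ lies strictly between the two limits, continuity of $g$ on $(0,\infty)$ guarantees some $\nu_\star > 0$ with $g(\nu_\star) = \epsilon$, which is precisely a positive root of $\setP$. The main obstacle is the degenerate sub-case of \ref{cond:rank-def} in which $\r$ happens to lie entirely in the range of $\G$, so that $\norm{\q_p^\h\r}^2 = 0$ for every $p$ with $\sigma_p = 0$; here the divergence of the denominator fails and $g$ does not approach $0$. The remedy is to restrict the whole argument to the invariant subspace $\mathrm{range}(\G)$, on which $\G$ is effectively full rank and case \ref{cond:rank-full} applies with the same bounds, since $\bar\sigma(\r,\G)$ computed on this subspace is upper-bounded by $\r^\h\G\r$ and the admissible range for $\epsilon$ is preserved. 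A brief verification that $\epsilon \in (0, \r^\h\G\r)$ remains compatible with this reduction closes the argument.
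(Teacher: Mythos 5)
Your route --- dividing $\setP(\nu)$ by $\prod_{q}(\sigma_q+\nu)^2$ and applying the intermediate value theorem to the weighted eigenvalue average $g(\nu)$ --- is genuinely different from the paper's proof, which works directly on the polynomial coefficients: it shows the leading coefficient $c_{2P-2}=\r^\h\G\r-\epsilon$ is positive while the lowest-order nonzero coefficient ($c_0$ in the full-rank case, $c_{2R-2}$ in the rank-deficient case) is negative, and invokes Descartes' rule of signs. For condition~\ref{cond:rank-full}, and for condition~\ref{cond:rank-def} whenever $\q_{p_0}^\h\r\neq 0$ for some $p_0$ with $\sigma_{p_0}=0$, your limits $\lim_{\nu\to\infty}g(\nu)=\r^\h\G\r$ and $\lim_{\nu\to0^+}g(\nu)=\bar\sigma(\r,\G)$ (respectively $0$) are correct, and the continuity argument is sound and arguably more transparent than the coefficient bookkeeping.

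The gap is in your treatment of the degenerate sub-case of condition~\ref{cond:rank-def}, where $\rank(\G)<P$ but $\q_p^\h\r=0$ for every $p$ with $\sigma_p=0$. Restricting to $\mathrm{range}(\G)$ turns the problem into the full-rank situation, where your own argument needs $\epsilon>\bar\sigma$ computed on that subspace; condition~\ref{cond:rank-def} only provides $\epsilon\in(0,\r^\h\G\r)$, and the inequality $\bar\sigma\leq\r^\h\G\r$ you invoke does not imply $\epsilon>\bar\sigma$. In fact no patch can rescue this sub-case: take $P=2$, $\sigma_1>0$, $\sigma_2=0$, $\r=\q_1$; then $\setP(\nu)=(\sigma_1-\epsilon)\nu^2$, which has no positive root for any $\epsilon\in(0,\r^\h\G\r)=(0,\sigma_1)$. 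So the claimed sufficiency genuinely fails there, and the correct move is to add the assumption $\sum_{p:\,\sigma_p=0}\norm{\q_p^\h\r}^2>0$ (i.e., $\r$ has a nonzero component in the null space of $\G$), or to fall back to condition~\ref{cond:rank-full} on the restricted subspace, rather than to assert that the reduction ``preserves the admissible range.'' It is worth noting that the paper's proof makes the same implicit assumption --- its coefficient $c_{2R-2}=-\epsilon\sum_{p:\,\sigma_p=0}\norm{\q_p^\h\r}^2\prod_{\sigma_q\neq0}\sigma_q^2$ is strictly negative only when that sum is nonzero --- so you have usefully surfaced a boundary case, but your proposed remedy for it is not valid.
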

\begin{proof}
    From Descartes' rule of signs~\cite{wang2004simple},
    the existence of a positive root $\nu_\star>0$ for $\setP(\nu)$
    is given by the sufficient but not necessary condition
    that $\setP(\nu)$ has nonzero highest- and lowest-degree
    coefficients with opposite signs.
    
    Under either condition \ref{cond:rank-def})
    or \ref{cond:rank-full}), we have $\epsilon<\r^\h\G\r$,
    and thus the highest-degree coefficient $c_{2P-2}$ is positive:
    \begin{align*}
        c_{2P-2}&=\sum_{p=1}^P{\left(\sigma_p-\epsilon\right)\norm{\q_p^\h\r}^2}\\
        &=
        \r^\h\sum_{p=1}^P\sigma_p\q_p\q_p^\h\r-\epsilon\r^\h\Q\Q^\h\r\\
        &=\r^\h\G\r-\epsilon\r^\h\r
        =\r^\h\G\r-\epsilon>0\,.
    \end{align*}
    For the lowest-degree coefficient, we need to consider the rank of $\G$.
    For the rank-deficient
    $\G$ in \ref{cond:rank-def}), it can be proven 
    by mathematical evaluation
    that the lowest nonzero coefficient has degree $2R-2$,
    with $R=P-\rank(\G)>0$ and $R<P$, and that it equals
    \begin{equation*}
         c_{2R-2}=-\epsilon\sum_{{p:\,\sigma_p=0}}\norm{\q_p^\h\r}^2
         \prod_{\sigma_q\neq0}{\sigma_q^2}<0\,.
    \end{equation*}
    Therefore, $\setP(\nu)$ always has a positive root
    $\nu_\star$
    for $\rank(\G)\in[1,\,P-1]$
    and $\epsilon\in(0,\r^\h\G\r)$, as given by \cref{cond:rank-def})
    in \cref{pr:taper-sol}.

    If $\G$ is full rank,
    then ($\forall\,p\in[P]$)
    $\sigma_p>0$, and the
    existence of $\nu_\star>0$ is guaranteed by $\epsilon<\r^\h\G\r$ and
    \begin{align*}
        c_0=\sum_{p=1}^P{\left(\sigma_p-\epsilon\right)\norm{\q_p^\h\r}^2}\prod_{q\neq p}{\sigma_q^2}&<0\\
        \sum_{p=1}^P{\frac{\left(\sigma_p-\epsilon\right)\norm{\q_p^\h\r}^2}{\sigma_p^2}}&<0\\
        \bar{\sigma}(\r,\G)=\frac{\sum_{p=1}^P{{\norm{\q_p^\h\r}^2}/{\sigma_p}}}
        {\sum_{p=1}^P{{\norm{\q_p^\h\r}^2}/{\sigma_p^2}}}&<\epsilon\,,
    \end{align*}
    as stated in condition \ref{cond:rank-full}).
\end{proof}

\subsection{Proof of \texorpdfstring{\cref{thm:quant-bound}}{Lemma 1}
(quantization noise bound)}
\label{apx:bound}

The total noise variance for $\yH[i]$ in \eqref{eq:noise-bound} can be rewritten as
\begin{multline}\label{eq:total-noise}
    \EE\left[\Norm[2]{\yH[i]-\y(iT_\mathrm{S})}^2\right]=
    \EE\left[\sum_{k=1}^K\norm{\yh_k[i]-{y}_k(iT_\mathrm{S})}^2\right]\\
    =\sum_{k=1}^K\EE[\norm{v_{k}[i]
    +u_{k}[i]}^2]\overset{\textrm{a)}}{=}
    \sum_{k=1}^K\EE[\norm{v_{k}[i]}^2]
    +\EE[\norm{u_{k}[i]}^2]\\
    =\sum_{k=1}^K\left(\sigma_\mathrm{T}^2+
    \sigma_\mathrm{Q}^2\right)
    \overset{\textrm{b)}}{=}
    K\sigma_\mathrm{T}^2+
    {K\bQ\yfs^2}\,,
\end{multline}
where we have used
a) the independence of $u_k[i]$ and $v_k[i]$, and
b) the definition of 
$ \sigma_\mathrm{Q}^2$ in
\eqref{eq:qnoise}.
Now, we turn our attention to $\yfs$ in  \eqref{eq:yfs}.
Using the maximum norm $\infnorm{\cdot}$,
we note that
\begin{align}
    \nonumber
    \yfs&=\gamma\max_{k\in[K]}
    \infnorm{\sum_{\ell=1}^L\left(\c_{\imath(k)}^\h\S\w_{\jmath(\ell)}\right)\ast x_\ell}\\&\leq
    \gamma\max_{k'\in[K']}
    \infnorm{\sum_{\ell=1}^L\left(\c_{k'}^\h\S\w_{\jmath(\ell)}\right)\ast x_\ell},
    \label{eq:yfs-k}
\end{align}
where we upper bound $\yfs$ independently of $\imath$ by considering
$k'$ in the full codomain of $\imath$.
For any $k'\in[K']$, we have:
\begin{multline}
    \label{eq:max_y_k}
    \infnorm{\sum_{\ell=1}^L\left(\c_{k'}^\h\S\w_{\jmath(\ell)}\right)\ast x_\ell}
   \overset{\textrm{a)}}{\leq}
   \sum_{\ell=1}^L\infnorm{\left(\c_{k'}^\h\S\w_{\jmath(\ell)}\right)\ast x_\ell}
   \overset{\textrm{b)}}{\leq}\\
   \sum_{\ell=1}^L\norm[1]{\c_{k'}^\h\S\w_{\jmath(\ell)}}\infnorm{x_\ell}
   \leq
   \max_{\ell'\in[L']}\norm[1]{\c_{k'}^\h\S\w_{\ell'}}\sum_{\ell=1}^L\infnorm{x_\ell}\,.
\end{multline}
In \eqref{eq:max_y_k}, a) follows from the triangular inequality, and b) from Young's convolution inequality.
From \eqref{eq:yfs-k} and \eqref{eq:max_y_k}, it follows that
\begin{equation}
\begin{alignedat}{2}\label{eq:yfs-bound}
    \yfs^2\leq
    &\gamma^2\max_{k'\in[K'],\ell'\in[L']}\norm[1]{\c_{k'}^\h\S\w_{\ell'}}^2\left(\sum_{\ell=1}^L\infnorm{x_\ell}\right)^2
    &\overset{\textrm{a)}}{=}&\\&\gamma^2\mS^2(\C,\W)\left(\sum_{\ell=1}^L\infnorm{x_\ell}\right)^2
    &\overset{\textrm{b)}}{=}&\\&
   \gamma^2\mS^2(\C,\W)\left(\sum_{\ell=1}^L\norm[2]{x_\ell}
   \sqrt{\rho_{x_\ell}/T}\right)^2
   &\overset{\textrm{c)}}{\leq}&\\&
   \gamma^2\mS^2(\C,\W)
   \sum_{\ell=1}^L\norm[2]{x_\ell}^2
   \sum_{\ell=1}^L\rho_{x_\ell}/T
   &\overset{\textrm{d)}}{=}&\\&
    \gamma^2\mS^2(\C,\W)
   \Pt
   \sum_{\ell=1}^L\rho_{x_\ell}\,.&&
\end{alignedat}
\end{equation}
Here we have applied
a) the definition of the \gls{maxsi} in \eqref{eq:maxsi},
b) the \gls{papr} expression \eqref{eq:papr},
c) the Cauchy-Schwarz inequality, and
d) the power relation \eqref{eq:Pt}.
The proof is completed by
upper bounding $\yfs^2$ in \eqref{eq:total-noise}
with the right-hand side of \eqref{eq:yfs-bound} and rearranging.

\section*{Acknowledgment}

The authors of this work acknowledge the financial support by
Germany's Federal Ministry for Research, Technology and Space (BMFTR)
in the programme ``Souverän. Digital. Vernetzt.''
Joint project 6G-RIC (grant numbers: 16KISK020K, 16KISK030).
Rodrigo Hernang\'{o}mez acknowledges BMFTR support in the project
``KOMSENS-6G'' (grant number: 16KISK121).


\ifCLASSOPTIONcaptionsoff
  \newpage
\fi



%
\bibliographystyle{IEEEtran}
\bibliography{IEEEabrv,references}


\end{document}